%% file: main.tex
\documentclass[conference]{IEEEtran}
\IEEEoverridecommandlockouts
\usepackage{cite}
\usepackage{amsmath,amssymb,amsfonts}
\usepackage{amsthm}

\usepackage{booktabs,array}
\usepackage{balance}
\usepackage{pgfplots}
\pgfplotsset{compat=1.16}
\usetikzlibrary{arrows.meta,positioning,shapes.geometric,calc,fit,backgrounds}
\newtheorem{theorem}{Theorem}
\newtheorem{proposition}{Proposition}
\newtheorem{lemma}{Lemma}
\newtheorem{corollary}{Corollary}
\usepackage{algorithm}
\usepackage{algorithmic}

\usepackage{graphicx}
\usepackage{textcomp}
\usepackage{xcolor}
\usepackage{url}
\usepackage[hidelinks,bookmarks=true,bookmarksnumbered=true]{hyperref}
\hypersetup{pdftitle={When a High Score Is an Illusion: Certifying Genuine versus Repackaged Forecasting Skill},
pdfauthor={Pin Ni, Francesca Medda, Ramin Okhrati},
pdfsubject={Reference designs and inference for forecast comparisons},
pdfkeywords={forecast evaluation, rank statistics, reference design, residual covariance}}
\def\BibTeX{{\rm B\kern-.05em{\sc i\kern-.025em b}\kern-.08em
    T\kern-.1667em\lower.7ex\hbox{E}\kern-.125emX}}
\begin{document}
\bstctlcite{fullauthornames}
\title{When a High Score Is an Illusion: Certifying\\Genuine versus Repackaged Forecasting Skill}

\author{\IEEEauthorblockN{Pin Ni\IEEEauthorrefmark{1}\IEEEauthorrefmark{2},
Francesca Medda\IEEEauthorrefmark{1}, Ramin Okhrati\IEEEauthorrefmark{1}}
\IEEEauthorblockA{\IEEEauthorrefmark{1}University College London, London, United Kingdom\\
\IEEEauthorrefmark{2}University of Luxembourg, Luxembourg\\
$\{$pin.ni.21, f.medda, r.okhrati$\}$@ucl.ac.uk; ni.pin@uni.lu}
}

\maketitle

\begin{abstract}
\input{paper/0_abstract}
\end{abstract}

\begin{IEEEkeywords}
forecast evaluation, residual covariance, bias correction,
rank statistics, multiple testing, panel forecasting
\end{IEEEkeywords}

\section{Introduction}\label{sec1}
\input{paper/1_introduction}
\input{paper/fig_reference_certificate}

\section{Related Work}\label{sec2}
\input{paper/2_related_work}

\section{Methodology}\label{sec3}
\input{paper/3_method}

\section{Experiments}\label{sec4}
\input{paper/4_experiment}

\section{Results and Analysis}\label{sec5}
\input{paper/5_analysis}

\section{Conclusion}\label{sec6}
\input{paper/6_conclusion}

\appendix[Proofs]\label{app:proofs}
\input{paper/appendix}

\begin{footnotesize}
\bibliographystyle{IEEEtran}
\bibliography{ref}
\end{footnotesize}

\end{document}

%% file: paper/0_abstract.tex
Ranks depend on the observations used for comparison. Reusing those
observations can add association between forecast and outcome rank contrasts
even when the evaluated forecast and outcome stay fixed. We characterize assignments that
preserve association between specified population-rank contrasts, including
forecast rank minus baseline rank compared with outcome rank minus baseline
rank. Conditional on
independent training, whole trajectories are sampled independently from a
common law, with unrestricted dependence within each trajectory. The expected
score separates into its target and an explicit interaction between map
pairs. When reassigning references, we keep the learned maps, reference
law and coefficient row sums fixed. Zero weighted reference overlap for every map pair is
necessary and sufficient for preservation uniformly over permitted maps and
laws. An unbiased three-trajectory kernel estimates the interaction;
independent evaluation and validation provide finite-sample lower bounds.
Complete U-statistics estimate the same target directly when all draws can
be recombined. Sharp shared-baseline ranges, including ties, tighten both
constructions. In a Beijing air-quality archive, interaction accounts for
91.4\% to 94.5\% of seven learned forecasts' expected shared scores under
the empirical archive law. Separate results address category-fitting error
and temporal feedback. In a matched category-control null experiment,
distinct references reduce rejections from 402 to 41 out of 1,000 panels.

%% file: paper/1_introduction.tex
A forecast that repeats last week's sales can be useful. To study its
contribution relative to that baseline, we must first specify what the
comparison measures~\cite{fair1990comparing,hewamalage2023pitfalls}.
We study forecast rank minus baseline rank compared with outcome rank minus baseline rank.

Ranks count how often a value exceeds reference values, splitting ties equally.
Each reference provides forecast, outcome and baseline values.
A reference with a lower forecast and outcome but a higher baseline than
those being evaluated contributes positively to both contrasts.
Changing the reference can move both contrasts while the evaluated values stay fixed.
Which assignments preserve the intended association?

\paragraph{Reference designs that preserve the target}
We reassign reference observations while keeping the learned maps,
reference law and coefficient row sums fixed.
We condition on independent training and independent whole trajectories from
that law; within-trajectory dependence is unrestricted.
The expected score separates into its target and an explicit map-pair interaction.
We prove that zero weighted overlap for every map pair characterizes target
preservation uniformly over permitted maps and laws.

\paragraph{From design to a certificate}
An unbiased three-trajectory kernel estimates the interaction. Independent
evaluation and validation give finite-sample lower bounds.
Complete U-statistics directly estimate the same target when all draws can be
recombined. Sharp shared-baseline ranges tighten both constructions.
Full baseline adjustment is a separate target; category and temporal branches
retain their own conditions (Table~\ref{tab:operating}).
Figure~\ref{fig:reference-certificate} shows the inputs. Experiments distinguish
calibration, certification and subsequent forecast loss.

%% file: paper/fig_reference_certificate.tex
\begin{figure*}[t]\centering
\input{paper/fig_reference_certificate_tikz}
\caption{Independent source designs. (a) Aggregate validation uses exact category masses to bound learning bias; evaluation supplies the score and sampling radius. (b) Whole trajectories share one law and are independent given $\mathcal H$. Fixed maps and row sums preserve the fitted target when references are separated. Fresh triples estimate the interaction when sharing is retained.}
\label{fig:reference-certificate}
\end{figure*}

%% file: paper/fig_reference_certificate_tikz.tex
% Native vector diagram at publication size: IEEEtran default Times and CM math.
\begin{tikzpicture}[
 x=1mm,y=1mm,font=\fontsize{9}{10.2}\selectfont,>=Stealth,
 box/.style={draw=black!65,line width=.45pt,align=left,inner sep=1.1mm,anchor=north west},
 arrow/.style={->,line width=.55pt,draw=black!80},
 note/.style={align=left,inner sep=0pt,anchor=north west},
 ref/.style={draw=black!60,line width=.45pt,minimum width=8mm,minimum height=5mm,inner sep=0pt},
 val/.style={inner sep=0pt}]
\path[use as bounding box] (0,0) rectangle (174,-50);
\node[note] at (0,0) {\textbf{Training fixes learned maps and coefficients before validation and evaluation.}};
\draw[black!35,line width=.4pt] (87,-5) -- (87,-50);
\node[note] at (0,-5) {\textbf{(a) Category certificate}};
\node[note] at (91,-5) {\textbf{(b) Whole-trajectory design}};
\node[box,text width=36mm] (validate) at (0,-10)
 {\textbf{Validation $A,B$}\\Independent pairs\\Known category masses};
\node[box,text width=37mm] (evaluate) at (44,-10)
 {\textbf{Evaluation $E$}\\Independent groups};
\node[box,text width=36mm] (allowance) at (0,-27)
 {Learning allowance\\$\widehat B_{\rm agg}$: category errors};
\node[box,text width=37mm] (sampling) at (44,-27)
 {Corrected score $\bar D$\\Sampling radius $r_U$};
\draw[arrow] (validate.south) -- (validate.south |- allowance.north);
\draw[arrow] (evaluate.south) -- (evaluate.south |- sampling.north);
\node[box,text width=81mm] (cat) at (0,-41)
 {$L_\alpha=\bar D-\widehat B_{\rm agg}-r_U(\alpha)>0$\\Positive association beyond the declared categories};
\draw[arrow] (allowance.south) -- (allowance.south |- cat.north);
\draw[arrow] (sampling.south) -- (sampling.south |- cat.north);
\node[note,text width=83mm] at (91,-10)
 {$\mathsf W_i$ independent under one law given $\mathcal H$;\\dependence within each trajectory is unrestricted.};
\node[note] at (94,-19) {Shared reference};
\node[note] at (140,-19) {Distinct references};
\node[ref] (shared) at (111,-27) {$\mathsf W_1$};
\node[val,align=center] (sa) at (99,-35) {$A$: forecast\\contrast};
\node[val,align=center] (sb) at (123,-35) {$B$: outcome\\contrast};
\draw[arrow] (shared.south west) -- (sa.north);
\draw[arrow] (shared.south east) -- (sb.north);
\node[ref] (ra) at (147,-27) {$\mathsf W_1$};
\node[ref] (rb) at (167,-27) {$\mathsf W_2$};
\node[val,align=center] (da) at (147,-35) {$A$: forecast\\contrast};
\node[val,align=center] (db) at (167,-35) {$B$: outcome\\contrast};
\draw[arrow] (ra.south) -- (da.north);
\draw[arrow] (rb.south) -- (db.north);
\node[note,text width=83mm] at (91,-41)
 {Same focal $\mathsf W_0$ and row sums $c,d$: same target $\theta_{\mathcal H}$.\\Shared-source term $\langle \mathbf C\mathbf D^T,\Lambda\rangle$; disjoint pools set $\mathbf C\mathbf D^T=0$.};
\end{tikzpicture}

%% file: paper/2_related_work.tex
Forecast evaluation separates extra information from loss improvement
\cite{fair1990comparing,giacomini2006tests}. Clark--West adjusts
nested-model squared-loss comparisons~\cite{clark2007approximately};
sequential methods bound average predictable score differences~\cite{choe2024comparing}.
We target fitted rank association under independent trajectory sampling.

Each scalar component of our symmetric kernel is one twelfth of the
classical order-three Spearman kernel~\cite[Example~2.2]{leung2018testing}.
We characterize target-preserving assignments through map-pair interactions.
Preprocessing can already induce either bias sign~\cite{moscovich2022preprocessing}.
Aggregate validation specializes higher-order bias estimation
\cite{robins2017higher,liu2024falsification}: for fixed fits and known category
masses, separate validation streams bound the signed error product.
McGrath--Mukherjee studies tuning and splitting for optimal
conditional-covariance estimation~\cite{mcgrath2026tuning}.

GCM and cross-fitting require regression-error conditions
\cite{shah2020hardness,chernozhukov2018dml,chiang2022multiway};
kernel, partial-copula and empirically calibrated tests address conditional
independence~\cite{zhang2011kernel,petersen2021partial,pan2026eccit}.
Concentration and betting supply our sampling tools
\cite{maurer2009empirical,peel2010empirical,waudby2024betting}.
Gini covariance and semi-supervised U-statistics use additional forecast
information~\cite{schechtman1987gini,kim2025semi}; recursive adjustments precede
our explicit temporal source identities~\cite{so1999recursive,arellano1995another}.

%% file: paper/3_method.tex
\subsection{Target and interpretation}
Let $X$ be a forecast, $Y$ its outcome and $Z$ the declared controls.
Panel analyses use standardized within-period average ranks. Let $m_X=\mathbb E[X\mid Z]$, $m_Y=\mathbb E[Y\mid Z]$ and
$r_X=X-m_X$, $r_Y=Y-m_Y$. The full-adjustment reference target is
\begin{equation}\label{eq:target}
 \theta_\infty=\mathbb E[r_Xr_Y],\qquad
 H_0:\theta_\infty\le0\quad\text{against}\quad H_1:\theta_\infty>0.
\end{equation}
If $v_X=\mathbb E r_X^2>0$, the best augmentation $m_Y+t r_X$, $t\ge0$,
reduces squared-error risk by $(\theta_\infty)_+^2/v_X$:
expand its risk as $\mathbb E r_Y^2-2t\theta_\infty+t^2v_X$.
For any fitted $\widehat t$ and positive $\theta_\infty$, the gain is
$\theta_\infty^2/v_X-v_X(\widehat t-\theta_\infty/v_X)^2$.
Certification concerns positive residual association under the stated
sampling model. Zero covariance permits nonlinear dependence.

For population fits $a_X,a_Y$ in a declared additive class,
conditional centering removes both cross terms in the residual target:
\begin{equation}\label{eq:projection-target}
\begin{split}
 \theta_{\mathcal A}&:=\mathbb E[(X-a_X)(Y-a_Y)]\\
 &=\theta_\infty+\mathbb E[(m_X-a_X)(m_Y-a_Y)].
\end{split}
\end{equation}
The second term must be negligible to interpret this as $\theta_\infty$.
Table~\ref{tab:operating} separates the targets and required information.

\input{paper/operating_choices}
\input{paper/trajectory_design_main}
\input{paper/reference_rank_main}

\input{paper/fixed_peer_sampling}
\input{paper/feedback_identity}

\subsection{Observed-panel sensitivity and selection}
Panel screens multiply cross-fitted residuals from additive entity effects
and baseline bins. Ordered panels use Newey--West variance at the stated
Bartlett lag~\cite{neweywest1987}; ratings use cluster variance.
Valid inference needs a cluster CLT, consistent variance and negligible
fitted-product error~\cite{shah2020hardness,chiang2022multiway}.

An outcome-free rule abstains on exact copies, matching or reversed cluster
orders with ties, and constant ranks. Setting $p^*=1$ preserves valid
inputs because $p^*\ge p$, without repairing invalid ones.
Resolution screens combine a fixed bin ladder to cancel a prespecified
leading bias term; remaining bias and fitting error must be negligible
relative to the standard error. The bias-decay exponents $\beta=1,2$ and splines are fixed before
evaluation.

\begin{algorithm}[b]
\caption{Audit a fixed forecast family}\label{alg:audit}
\begin{algorithmic}[1]
\STATE Fix the target, full family and design (Table~\ref{tab:operating}) before evaluation.
\STATE For a certificate, use independent training to fix maps, coefficients, bounds and budgets. For a panel screen, record its fitting protocol.
\STATE Check source IDs, row sums and forecast availability; justify the branch's sampling assumptions.
\STATE Return the category bound~\eqref{eq:category-certificate}, the trajectory bound~\eqref{eq:trajectory-bound} or a labeled panel screen.
\STATE For $p$-value rules, set abstentions to $p^*=1$ and leave other $p$-values unchanged; apply BY for FDR.
\STATE For fixed-level bounds, test each of $K$ candidates at $\alpha/K$ for Bonferroni familywise control. Return the target and validity status.
\end{algorithmic}
\end{algorithm}
Algorithm~\ref{alg:audit} fixes the rule and family.
A valid positive lower bound \emph{certifies} its target at the declared level;
\emph{retained} denotes a BY selection, and a guard triggers \emph{abstention}.
A raw $p<\alpha$ is a \emph{nominal positive}.
Panel outputs remain screens unless their inference conditions hold.

\subsection{Multiple testing and selection}

\noindent\begin{minipage}{\columnwidth}
\begin{proposition}[Complete-family error control]\label{prop:fdr}
For fixed $K$ and super-uniform true-null $p$-values, BY uses the step-up
thresholds $j\alpha_{\rm FDR}/(K H_K)$, $H_K=\sum_{j=1}^K1/j$, and controls the false discovery rate (FDR)
at $\alpha_{\rm FDR}$ under arbitrary dependence. Fixed $K$ and asymptotically valid
inputs give asymptotic control.
\end{proposition}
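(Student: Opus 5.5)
The proof follows the classical Benjamini--Yekutieli argument. Let $\mathcal H_0\subseteq\{1,\dots,K\}$ index the true nulls, and write $q=\alpha_{\rm FDR}/H_K$, so the step-up thresholds are $jq/K$. Let $R$ be the number of rejections; it is a function of the whole vector $(p_1,\dots,p_K)$. First, write the false discovery proportion as a sum over true nulls and over the number of rejections:
\begin{equation*}
\mathrm{FDR}=\sum_{i\in\mathcal H_0}\sum_{k=1}^K\frac1k\,\Pr(p_i\le kq/K,\ R=k).
\end{equation*}
This uses the fact that, for the step-up rule, $i$ is rejected with $R=k$ exactly when $p_i\le kq/K$ and $R=k$. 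No dependence assumption enters here.

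Second, for each fixed $i\in\mathcal H_0$, bound the inner sum without using any joint law. Decompose the event $\{p_i\le kq/K\}$ into the shells $\{(j-1)q/K<p_i\le jq/K\}$ for $j\le k$. Swapping the order of summation gives
\begin{equation*}
\sum_{k}\frac1k\Pr(p_i\le kq/K,\,R=k)=\sum_{j=1}^K\Pr\bigl((j-1)q/K<p_i\le jq/K\bigr)\sum_{k\ge j}\frac1k\Pr\bigl(R=k\mid \text{shell } j\bigr).
\end{equation*}
This is bounded by
\begin{equation*}
\sum_{j}\frac1j\Pr\bigl((j-1)q/K<p_i\le jq/K\bigr)=\sum_{j}\frac{1}{j}\bigl[F_i(jq/K)-F_i((j-1)q/K)\bigr].
\end{equation*}
Here $F_i$ denotes the distribution function of $p_i$, and the bound uses $1/k\le1/j$ together with the fact that the conditional probabilities of $\{R=k\}$ sum to at most one.

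Third, evaluate the bound by Abel summation. This gives
\begin{equation*}
\sum_{j=1}^{K-1}\frac{F_i(jq/K)}{j(j+1)}+\frac{F_i(q)}{K}.
\end{equation*}
Super-uniformity means $F_i(t)\le t$, and the bound is increasing in each $F_i(\cdot)$. Substituting $F_i(t)=t$ therefore gives the upper bound
\begin{equation*}
\sum_{j=1}^{K-1}\frac{q}{K(j+1)}+\frac{q}{K}=\frac{q}{K}H_K.
\end{equation*}
Summing over $i\in\mathcal H_0$ yields
\begin{equation*}
\mathrm{FDR}\le\frac{K_0}{K}\,qH_K\le\alpha_{\rm FDR}.
\end{equation*}
The main obstacle is the second step. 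The argument must use only the marginal law of each $p_i$ and the self-consistency structure of step-up, never the joint dependence. The swap-and-telescope argument above, which is the Blanchard--Roquain self-consistency/shape-function view, is exactly what makes the bound hold under arbitrary dependence.

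For the asymptotic statement, suppose the inputs are asymptotically valid, meaning $\limsup_n\Pr(p_i^{(n)}\le t)\le t$ for each true null and each fixed $t$. With $K$ fixed, the bound from the third step involves only finitely many values $F_i(jq/K)$, with $j\le K$ and $i\in\mathcal H_0$. Taking $\limsup$ term by term therefore gives $\limsup_n\mathrm{FDR}_n\le\alpha_{\rm FDR}$. Abstentions set to $p^*=1$ remain super-uniform, since $p^*\ge p$, so they are covered as well.
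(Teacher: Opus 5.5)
Your proof is correct, but it differs from the paper's, which simply cites the Benjamini--Yekutieli theorem rather than proving it. The paper applies BY conditionally on the selection information when validity is only conditional, then averages. It also warns that marginal validity does not transfer to conditional validity: for uniform $p$ and $\alpha<t$, $\Pr(p\le\alpha\mid p\le t)=\alpha/t$. You instead rederive the BY bound from three steps: the rejection-count identity, the shell bound and Abel summation.

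Your route has a concrete payoff. The final bound $\sum_{j<K}F_i(jq/K)/\{j(j+1)\}+F_i(q)/K$ involves only finitely many marginal distribution-function values. The asymptotic clause therefore follows from a term-by-term $\limsup$, a step the paper asserts without argument.

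The paper's route makes explicit a conditioning step that yours leaves implicit. Here the certificate $p$-values are super-uniform only given the training information, e.g.\ $\mathcal H$. That information also fixes $\theta_{\mathcal H}$ and hence which hypotheses are null. You should therefore say that your argument runs under this conditional law, with the null set fixed, and then average by the tower property. Because you use only marginal distribution functions, this costs nothing.

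One boundary detail: take the first shell to be $\{p_i\le q/K\}$ rather than $\{0<p_i\le q/K\}$, so the decomposition covers $p_i=0$. That event is null under exact super-uniformity, but not necessarily for finite-$n$ inputs in the asymptotic branch. Your Abel-summed expression already corresponds to this convention.
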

\end{minipage}
\par\smallskip
All models remain in the family, including abstentions. Selecting the family
or audit rule after viewing results requires a separate conditional-validity
argument or independent confirmation.

%% file: paper/operating_choices.tex
\begin{table}[t]\centering\footnotesize
\caption{Targets and required information.}\label{tab:operating}
\setlength{\tabcolsep}{3pt}
\begin{tabular}{@{}>{\raggedright\arraybackslash}p{.45\columnwidth}>{\raggedright\arraybackslash}p{.51\columnwidth}@{}}\toprule
Target & Required information\\\midrule
$\theta_{\mathcal H}$: association between specified fitted rank contrasts & Given $\mathcal H$, fixed maps and coefficients; i.i.d.\ whole trajectories.\\\addlinespace[2pt]
$\vartheta$: rank association after category adjustment & Independent categorical peers and validated fits; exact masses for aggregate validation.\\\addlinespace[2pt]
$\theta_\infty$: association after full adjustment for $Z$ & Conditional means or a justified remaining-bias bound, and a valid sampling model.\\\addlinespace[2pt]
$\theta_{\mathcal A}$: association after additive adjustment & Additive population fits; observed-panel inference needs joint calibration.\\\bottomrule
\end{tabular}\par\smallskip
\begin{minipage}{\columnwidth}\footnotesize
Known forecast ranks permit direct kernels; a complete outcome archive permits census.
Category-fit training $H$ and trajectory-map training $\mathcal H$ define separate designs.\end{minipage}
\end{table}

%% file: paper/trajectory_design_main.tex
\subsection{Reference design for learned forecasts}
\label{sec:trajectory-design}
For the sales example, let $\phi=(\text{forecast},\text{baseline})$ and
$\psi=(\text{sales},\text{baseline})$. Row sums $c,d$ fix the fitted contrasts:
$(1,-1)$ subtracts the baseline's centered population midrank on each side.
This defines $\theta_{\mathcal H}$, whereas category adjustment and
$\theta_\infty$ remove conditional means. For baseline $U$ uniform on $\{-1,0,1\}$, forecast and sales $U^2$
give $\theta_{\mathcal H}=7/54$ with $c=d=(1,-1)$; full and
category-adjusted targets are zero for $Z=U$.

Write $a(v,v')=\mathbf1\{v'<v\}+\tfrac12\mathbf1\{v'=v\}$ for a midrank comparison.
Condition on information $\mathcal H$ fixing scalar maps $\phi_s,\psi_t$
and coefficient matrices $\mathbf C,\mathbf D$. Whole trajectories
$\mathsf W_0,\ldots,\mathsf W_{J_{\rm ref}}$ are conditionally i.i.d.;
dependence inside each trajectory is unrestricted. Define
$p_{sj}=a(\phi_s(\mathsf W_0),\phi_s(\mathsf W_j))-1/2$ and $q_{tj}$ analogously.
Set $A=\sum C_{sj}p_{sj}$, $B=\sum D_{tj}q_{tj}$,
$c=\mathbf C\mathbf1$, $d=\mathbf D\mathbf1$,
$m_s=\mathbb E[p_{s1}\mid\mathsf W_0,\mathcal H]$ and $n_t$ analogously.

\begin{proposition}[Conditional trajectory design]\label{prop:trajectory}
With $\boldsymbol\Omega=\mathbf C\mathbf D^T$ and
$\Lambda_{st}=\mathbb E[\operatorname{Cov}(p_{s1},q_{t1}\mid\mathsf W_0,\mathcal H)\mid\mathcal H]$,
\begin{equation}\label{eq:trajectory-identity}
 \mathbb E[AB\mid\mathcal H]
 =\underbrace{\mathbb E[(c^Tm)(d^Tn)\mid\mathcal H]}_{\theta_{\mathcal H}}
 +\langle\boldsymbol\Omega,\Lambda\rangle.
\end{equation}
For fixed $\mathbf C,\mathbf D$, preservation for every map and law holds iff
$\boldsymbol\Omega=0$ entrywise. On three fresh trajectories,
$Q=\Delta p^T\boldsymbol\Omega\Delta q/2$ is unbiased for the interaction,
where $\Delta p_s=p_{s1}-p_{s2}$ and $\Delta q_t=q_{t1}-q_{t2}$.
\end{proposition}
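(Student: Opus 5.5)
The identity follows from conditioning on the focal trajectory $\mathsf W_0$ and using independence of the references. Given $\mathsf W_0$ and $\mathcal H$, the reference trajectories $\mathsf W_1,\ldots,\mathsf W_{J_{\rm ref}}$ are i.i.d. Hence the vectors $(p_{\cdot j},q_{\cdot j})$ are conditionally i.i.d. across $j$, with means $m_s,n_t$ that do not depend on $j$. Expand
\[
\mathbb E[AB\mid\mathsf W_0,\mathcal H]=\sum_{s,j,t,k}C_{sj}D_{tk}\,\mathbb E[p_{sj}q_{tk}\mid\mathsf W_0,\mathcal H].
\]
For $j\ne k$, conditional independence makes the inner expectation equal to $m_sn_t$. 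For $j=k$, it equals $m_sn_t+\operatorname{Cov}(p_{s1},q_{t1}\mid\mathsf W_0,\mathcal H)$.

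Summing the $m_sn_t$ parts over all $j,k$ gives $(c^Tm)(d^Tn)$, because $\sum_jC_{sj}=c_s$. The diagonal covariance parts carry the weight $\sum_jC_{sj}D_{tj}=\Omega_{st}$. Taking $\mathbb E[\cdot\mid\mathcal H]$ then yields \eqref{eq:trajectory-identity}. Because the maps and coefficients are $\mathcal H$-measurable, they factor out of every conditional expectation.

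For the characterization, sufficiency is immediate: $\boldsymbol\Omega=0$ kills the interaction. The target $\theta_{\mathcal H}$ depends on $\mathbf C,\mathbf D$ only through the fixed row sums $c,d$, so it is unchanged. For necessity, suppose $\Omega_{st}\ne0$ for some $(s,t)$. I would construct permitted maps and a law with $\Lambda=\pm e_se_t^T$ up to a positive scale. Take $\phi_s=\psi_t$ equal to a common continuous scalar coordinate, and take all other maps constant, so their $p,q$ are identically $0$ (ties give $a=1/2$). Then
\[
\Lambda_{st}=\mathbb E\operatorname{Var}(p_{s1}\mid\mathsf W_0)=\mathbb E[F(1-F)]=1/6>0,
\]
where $F$ is the CDF evaluated at the focal value, and every other entry of $\Lambda$ is zero. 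The interaction then equals $\Omega_{st}/6\ne0$, so preservation fails. If the permitted maps cannot be chosen freely per index, I would instead vary the law so that the $s$-th and $t$-th coordinates are equal or independent. This shows that each entry of $\Lambda$ can be isolated, and that in general $\Lambda$ ranges over enough matrices to span all entries. I expect this step to be the main obstacle, since it depends on how rich the class of "permitted maps and laws" is; the constant/identity construction above is my first choice.

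For unbiasedness, condition on $\mathsf W_0$. Then $(p_{\cdot1},q_{\cdot1})$ and $(p_{\cdot2},q_{\cdot2})$ are i.i.d. copies, so
\[
\mathbb E[\Delta p_s\Delta q_t\mid\mathsf W_0,\mathcal H]=2\operatorname{Cov}(p_{s1},q_{t1}\mid\mathsf W_0,\mathcal H),
\]
because the mean terms cancel. Multiplying by $\Omega_{st}/2$, summing over $(s,t)$ and averaging over $\mathsf W_0$ gives
\[
\mathbb E[Q\mid\mathcal H]=\langle\boldsymbol\Omega,\Lambda\rangle.
\]
This requires the three trajectories to be fresh, i.e.\ conditionally i.i.d.\ from the same law given $\mathcal H$.
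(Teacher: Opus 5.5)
Your proposal is correct and follows the paper's proof essentially step for step. It uses the same conditional expansion, with the $\mathbf 1\{j=k\}$ covariance term weighted by $\Omega_{st}=\sum_j C_{sj}D_{tj}$. It uses the same independent-copy identity $\mathbb E[\Delta p_s\Delta q_t\mid\mathsf W_0,\mathcal H]=2\operatorname{Cov}(p_{s1},q_{t1}\mid\mathsf W_0,\mathcal H)$ for $Q$, and the same necessity construction of one shared map pair with all other maps constant. The only difference is cosmetic: your continuous coordinate gives $\Lambda_{st}=\mathbb E[F(1-F)]=1/6$, whereas the paper's Bernoulli$(1/2)$ pair gives $\Lambda_{st}=1/16$. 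Your fallback for restricted map classes is unnecessary, since any fixed scalar maps and common law are permitted.
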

For a fixed map and law, nonzero $\boldsymbol\Omega$ can still be orthogonal to
$\Lambda$; it does not establish bias. When reassigning references, hold the maps, reference law androw sums $c,d$ fixed. Independent training
may learn both maps and coefficients. Same-sample adaptation, random rank
standardization and dependence across trajectories need separate justification.
Forecast maps must also use only information available when issued.

For example, take $\phi=\psi=(U,V)$ with independent fair Bernoulli variables
and $\mathbf C=\mathbf D=(1,-1)^T$. Every map and coefficient row is nonzero;
$\boldsymbol\Omega=\left(\begin{smallmatrix}1&-1\\-1&1\end{smallmatrix}\right)$
has entry sum zero but $\Lambda=I_2/16$ gives interaction $1/8$.

\paragraph{Observable correction and uncertainty}
Use $M\ge1$ independent evaluation blocks and $n\ge1$ independent validation
triples. Their means are $\bar S=M^{-1}\sum_{b=1}^M A_bB_b$ and
$\bar Q=n^{-1}\sum_{b=1}^n Q_b$. Entrywise norms give valid range-width bounds
$W_S=\|\mathbf C\|_1\|\mathbf D\|_1/2$ and $W_Q=\|\boldsymbol\Omega\|_1$.
For observations $x$ of size $m$, set $h(W,m,u)=W\sqrt{\log(1/u)/(2m)}$.
For $m\ge2$, with unbiased sample variance $s_x^2$, the preselected hybrid uses
\begin{equation}\label{eq:trajectory-hybrid}
\begin{split}
 e(x,W,u)&=s_x\sqrt{\frac{2\log(2/u)}m}+\frac{7W\log(2/u)}{3(m-1)},\\
 r(x,W,u)&=\min\{h(W,m,u/2),e(x,W,u/2)\},
\end{split}
\end{equation}
The pure-range rule sets $r=h$, also used when $m=1$.
For fixed $0<\delta<\alpha<1$.
\begin{equation}\label{eq:trajectory-bound}
 L_\alpha=\bar S-\bar Q-r(S,W_S,\alpha-\delta)-r(Q,W_Q,\delta)
\end{equation}
has conditional noncoverage at most $\alpha$ by classical concentration
\cite{hoeffding1963,maurer2009empirical}. Invert the preselected rule as
$p=\inf(\{\alpha\in(\delta,1):L_\alpha>0\}\cup\{1\})$.
The released inversion routine uses the range rule; the hybrid is evaluated at a fixed level.
A known zero range gives zero uncertainty radius. Total cost is
$N_{\rm tot}=M(J_{\rm ref}+1)+3n$ trajectories, plus training.
Columns zero in both matrices can be deleted without changing any score or
width. Let $g=1+J_{\rm active}$ count the focal and nonzero reference roles.
Minimizing the declared range radius $k_S/\sqrt M+k_Q/\sqrt n$,
$k_S,k_Q>0$, under $gM+3n\le N$ gives
$M/n=(3k_S/(gk_Q))^{2/3}$. The constants combine range widths and
log-levels, fixed before observing scores.

\paragraph{Direct estimation when all trajectories are available}
The complete distinct-reference U-statistic uses every draw to estimate the
same target. Its six-role kernel has range width at most
$W_U=\|c\|_1\|d\|_1/6$. With $q=\lfloor N_{\rm tot}/3\rfloor$,
the generic-width radius $h(W_U,q,\alpha)$ is less than
$h(W_S,M,\alpha-\delta)/\sqrt6$ for positive integer $M,n,J_{\rm ref}$
and $W_S>0$. This compares radii, not realized centers or power.
Thus the interaction correction diagnoses an existing design; complete U
provides a stronger range bound when the same raw draws can be recombined.

\paragraph{Sharper bounds for a shared baseline}
For $\phi=(h,b)$, $\psi=(Y,b)$ and $\mathbf C=\mathbf D=(1,-1)^T$,
the sharp ranges, including ties, are
$S\in[-1/4,1]$, $Q\in[-1/2,2]$ and $U\in[-1/6,1/3]$
for the six-role symmetric kernel.
Substituting widths $5/4$, $5/2$ and $1/2$ tightens both constructions.
A declared map identity $h=b$ gives zero kernels and widths;
equality on the observed sample alone does not establish that identity.

%% file: paper/reference_rank_main.tex
\subsection{Rank references and learning error}
To rank a product, the sales audit compares it with reference products.
Consider $M$ independent groups of $N\ge3$
i.i.d.\ raw triples $(V_i,W_i,Z_i)$ from a common law, with nuisance fits
$f,g\in[0,1]$ fixed by independent category-fit training $H$
(distinct from the trajectory-map training information $\mathcal H$).
Using the midrank comparison $a$, let
$a_{ij}=a(V_i,V_j)$ and $b_{ij}=a(W_i,W_j)$. Set $k=N-1$, $U_{V,i}=k^{-1}\sum_{j\ne i}a_{ij}$
and $U_{W,i}=k^{-1}\sum_{j\ne i}b_{ij}$.
Let $F_V(v)=\Pr(V<v)+\tfrac12\Pr(V=v)$ and define $F_W$ analogously.
Write $R_V=F_V(V)$,
$R_W=F_W(W)$, $m_V=\mathbb E[R_V\mid Z,H]$ and $m_W$ analogously.
The target is $\vartheta=\mathbb E[(R_V-m_V)(R_W-m_W)\mid H]$,
separate from standardized panel ranks.
The shared-reference score $C_i=(U_{V,i}-f(Z_i))(U_{W,i}-g(Z_i))$ becomes
\begin{equation}\label{eq:reference-correction-main}
 D_i=C_i+\frac{U_{V,i}U_{W,i}-k^{-1}\sum_{j\ne i}a_{ij}b_{ij}}{k-1}.
\end{equation}
This uses distinct reference indices without discarding observations.
Given fitted means, the correction needs only two marginal comparison sums
and one joint-product sum per row. Sorting and cumulative counts evaluate
all $N$ scores in $O(N\log N)$ time and $O(N)$ working memory, including ties.

\begin{proposition}[Reference-reuse decomposition]\label{prop:reference-main}
Write $b_H=\mathbb E[(m_V-f)(m_W-g)\mid H]$ and
$\Gamma=\mathbb E[a_{12}b_{12}-F_V(V_1)F_W(W_1)\mid H]$.
Then, including ties,
\begin{equation}\label{eq:reference-decomposition-main}
 \mathbb E[C_i\mid H]=\vartheta+b_H+\Gamma/k,\qquad
 \mathbb E[D_i\mid H]=\vartheta+b_H.
\end{equation}
\end{proposition}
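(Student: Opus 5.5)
Fix a row $i$ and condition on $H$ throughout; by exchangeability of the $N$ triples within a group, all expectations can be computed for $i=1$ with references $j,l\in\{2,\dots,N\}$. The plan is to expand $C_1$ into four products and evaluate each conditional mean separately. The key identity is that, given $(V_1,W_1,Z_1)$ and $H$, the comparisons $a_{1j}$ for distinct $j$ are i.i.d. with conditional mean $\mathbb E[a_{1j}\mid V_1]=F_V(V_1)=R_{V,1}$; this uses the midrank convention $a(v,v')=\mathbf 1\{v'<v\}+\tfrac12\mathbf 1\{v'=v\}$, so ties are handled automatically. The same holds for $b_{1j}$, and the pairs $(a_{1j},b_{1j})$ are i.i.d. across $j$ given the focal triple.

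\textbf{Step 1: cross terms.} Since $f(Z_1)$ is fixed given the focal triple and $H$, conditional linearity gives
\[
\mathbb E[U_{V,1}g(Z_1)\mid H]=\mathbb E[R_{V,1}g(Z_1)\mid H],
\]
and similarly for $f(Z_1)U_{W,1}$. The term $f g$ is unchanged.

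\textbf{Step 2: the product $U_{V,1}U_{W,1}$.} Write
\[
U_{V,1}U_{W,1}=k^{-2}\sum_{j\ne l}a_{1j}b_{1l}+k^{-2}\sum_j a_{1j}b_{1j}.
\]
For $j\ne l$, conditional independence gives mean $R_{V,1}R_{W,1}$; there are $k(k-1)$ such terms. Each diagonal term has mean $\mathbb E[a_{12}b_{12}\mid H]$. Hence
\[
\mathbb E[U_{V,1}U_{W,1}\mid H]=\mathbb E[R_VR_W\mid H]+\Gamma/k .
\]
Combining with Step 1,
\[
\mathbb E[C_1\mid H]=\mathbb E[(R_V-f)(R_W-g)\mid H]+\Gamma/k .
\]

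\textbf{Step 3: identify $\vartheta+b_H$.} Split $R_V-f=(R_V-m_V)+(m_V-f)$, and likewise for $W$. The cross terms vanish by the tower property, because $m_V-f$ and $m_W-g$ are functions of $(Z,H)$ while $R_V-m_V$ has zero conditional mean given $(Z,H)$. What remains is $\vartheta+b_H$. One point needs care: $m_V=\mathbb E[R_V\mid Z,H]$ and $f,g$ are fixed by $H$, and $H$ is independent of the evaluation triples, so conditioning on $H$ changes neither $F_V$ nor the i.i.d. structure. This establishes the first identity.

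\textbf{Step 4: the corrected score.} From Step 2, the correction numerator has mean
\[
\mathbb E\Bigl[U_{V,1}U_{W,1}-k^{-1}\sum_j a_{1j}b_{1j}\,\Bigm|\,H\Bigr]=\frac{k-1}{k}\bigl(\mathbb E[R_VR_W\mid H]-\mathbb E[a_{12}b_{12}\mid H]\bigr)=-\frac{k-1}{k}\,\Gamma .
\]
Dividing by $k-1$, which is positive because $N\ge3$, yields $-\Gamma/k$. This cancels the reuse term and gives the second identity. Equivalently, $D_1$ is the distinct-index statistic $(k(k-1))^{-1}\sum_{j\ne l}a_{1j}b_{1l}$ recentered by the fits, which explains the formula.

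The only delicate step is Step 2, namely the bookkeeping of diagonal versus off-diagonal pairs and verifying that $\mathbb E[a_{1j}\mid V_1]=F_V(V_1)$ holds exactly with ties. Boundedness of all quantities in $[0,1]$ guarantees that every expectation exists.
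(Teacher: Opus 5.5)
Your proof is correct and follows the paper's route. Conditional independence gives $F_V(V_i)F_W(W_i)$ for distinct reference indices, the tower property turns the centered product into $\vartheta+b_H$, and the $k$ shared diagonal pairs contribute $\Gamma/k$; the only difference is that the paper averages the distinct-index centered product directly (which equals $D_i$, as you note at the end) rather than expanding $C_i$ and subtracting the correction.
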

Distinct references remove reuse bias; either conditional-mean fit being exact removes $b_H$.
Cauchy--Schwarz gives $|b_H|\le\varepsilon_f\varepsilon_g$,
where $\varepsilon_f^2=\mathbb E[(m_V-f)^2\mid H]$ and similarly for $g$.
The remaining fitting error depends on their product.

Set $r_V=R_V-m_V$, $r_W=R_W-m_W$ and, for an independent copy $O'$,
\begin{equation}\label{eq:reference-projection}
\begin{split}
 h_V(v)&=\mathbb E[(a(V',v)-m_V(Z'))r_W'],\\
 h_W(w)&=\mathbb E[r_V'(a(W',w)-m_W(Z'))],\\
 \psi&=r_Vr_W+h_V(V)+h_W(W)-3\vartheta,
 \quad \sigma_\psi^2=\mathbb E\psi^2.
\end{split}
\end{equation}

\begin{theorem}[Covariance boundary and reference reuse]\label{thm:reference-main}
For $M,N\to\infty$, a fixed common law, exact $f=m_V$, $g=m_W$
and $\sigma_\psi^2>0$, let $A_j$ be group $j$'s mean ($A=C,D$),
$\bar A=M^{-1}\sum_j A_j$,
$\widehat v_A=(M-1)^{-1}\sum_j(A_j-\bar A)^2$ and
$T_A=\sqrt M\bar A/\sqrt{\widehat v_A}$, set to zero at zero variance.
Then
\begin{equation}\label{eq:general-reference-limit}
\begin{split}
 \sqrt{MN}(\bar D-\vartheta)&\Rightarrow\mathcal N(0,\sigma_\psi^2),\\
 \sqrt{MN}\{\bar C-\vartheta-\Gamma/(N-1)\}
 &\Rightarrow\mathcal N(0,\sigma_\psi^2),\\
 N\widehat v_A&\to_p\sigma_\psi^2.
\end{split}
\end{equation}
At $\vartheta=0$, $T_D\Rightarrow\mathcal N(0,1)$. At the same boundary, if $M/N\to\lambda<\infty$, then
$T_C\Rightarrow\mathcal N(\Gamma\sqrt\lambda/\sigma_\psi,1)$;
if $\Gamma>0$ and $M/N\to\infty$, its one-sided rejection probability
tends to one. At fixed negative or positive $\vartheta$, the corrected
rejection probability tends to zero or one, respectively.
Under $V\perp W\mid Z$, $h_V=h_W=0$ and
$\sigma_\psi^2=\mathbb E[r_V^2r_W^2]$.
\end{theorem}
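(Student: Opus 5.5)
The plan is a Hoeffding decomposition of the within-group U-statistics, a Lindeberg CLT across groups, and a separate treatment of the reuse bias of $C$. Fix $H$ and work conditionally throughout; with exact $f=m_V$, $g=m_W$ we have $b_H=0$, so Proposition~\ref{prop:reference-main} gives $\mathbb E D_i=\vartheta$ and $\mathbb E C_i=\vartheta+\Gamma/k$.

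\textbf{Step 1 (structure of $D_i$).} By expanding the distinct-index form behind \eqref{eq:reference-correction-main}, the group mean $D_j=N^{-1}\sum_i D_i$ is a U-statistic of order three in the raw triples. Its kernel is
\[
(a(V_i,V_j)-m_V(Z_i))(a(W_i,W_l)-m_W(Z_i)),\qquad j\neq l,
\]
averaged over ordered triples with $i,j,l$ distinct, plus order-two terms from the $f,g$ cross products. All kernels are bounded, since $a,f,g\in[0,1]$.

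\textbf{Step 2 (first-order projection).} Compute the Hájek projection onto single observations. The role $i$ gives $r_Vr_W-\vartheta$. The role $j$, with $i,l$ integrated out, gives $h_V(V_j)-\vartheta$: integrating over $l$ replaces $a(W_i,W_l)-m_W(Z_i)$ by $R_{W,i}-m_{W,i}=r_{W,i}$. The role $l$ gives $h_W(W_l)-\vartheta$. Summing the three roles yields exactly $\psi$ in \eqref{eq:reference-projection}. Hence
\[
D_j-\vartheta=N^{-1}\sum_i\psi(O_i)+R_j,\qquad \mathbb E R_j^2=O(N^{-2}).
\]
This is the standard degenerate-remainder bound for bounded kernels, and it also covers the two-index terms, which are centered because the fits are exact. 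Consequently $\operatorname{Var}(D_j)=\sigma_\psi^2/N+O(N^{-2})$.

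\textbf{Step 3 (limits for $D$ and $C$).} Sum over groups. The term $\sqrt{MN}(\bar D-\vartheta)$ equals $(MN)^{-1/2}$ times a sum of $MN$ i.i.d.\ bounded variables $\psi$, plus a remainder with second moment $O(1/N)\to0$. The CLT then gives the first limit. For $C$, write
\[
C_i=D_i-\frac{U_{V,i}U_{W,i}-k^{-1}\sum_{j\neq i}a_{ij}b_{ij}}{k-1}.
\]
The correction term has mean $-\Gamma/k$ by Proposition~\ref{prop:reference-main}. Its centered fluctuation is an $O(1/N)$ multiple of a mean of U-statistics whose variance is $O(1/N)$, so after scaling by $\sqrt{MN}$ the fluctuation has second moment $O(M\cdot N\cdot N^{-2}\cdot N^{-1})=O(M/N^2)$.

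I expect this to be the main obstacle. The bound vanishes only when $M=o(N^2)$, so it needs a sharper look: the correction's centered group fluctuation is in fact $O_p(N^{-3/2})$ and is independent across groups. Its group-sum variance is therefore $M\cdot O(N^{-3})$, and after multiplying by $MN$ and dividing by $M^2$ it becomes $O(N^{-2})\to0$ with no rate condition on $M$. I would verify this by Hoeffding-decomposing the correction term itself.

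\textbf{Step 4 (variance consistency).} The group means $A_j$ are i.i.d.\ with $N\operatorname{Var}(A_j)\to\sigma_\psi^2$, and $N^2\mathbb E(A_j-\mathbb E A_j)^4$ is bounded, since the leading part is an average of bounded i.i.d.\ terms. A variance-of-sample-variance computation then gives
\[
N\widehat v_A-N\operatorname{Var}(A_j)\to_p0.
\]
Since $\widehat v_A$ is location-invariant, the bias $\Gamma/k$ does not enter.

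\textbf{Step 5 (test statistics).} At $\vartheta=0$, Slutsky gives $T_D\Rightarrow\mathcal N(0,1)$. For $T_C$, write
\[
T_C=\frac{\sqrt{MN}(\bar C-\Gamma/k)}{\sqrt{N\widehat v_C}}+\frac{\sqrt{MN}\,\Gamma/(N-1)}{\sqrt{N\widehat v_C}}.
\]
The second term tends to $\Gamma\sqrt\lambda/\sigma_\psi$ when $M/N\to\lambda$, and to $+\infty$ when $\Gamma>0$ and $M/N\to\infty$, which gives rejection probability tending to one. At fixed $\vartheta\neq0$, $T_D$ equals $\sqrt{MN}\,\vartheta/\sigma_\psi+O_p(1)$, which diverges to $\pm\infty$, so the corrected rejection probability tends to zero or one.

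\textbf{Step 6 (conditional independence).} Under $V\perp W\mid Z$,
\[
h_V(v)=\mathbb E\bigl[\mathbb E[a(V',v)-m_V(Z')\mid Z']\,\mathbb E[r_W'\mid Z']\bigr]=0,
\]
because $\mathbb E[r_W'\mid Z']=0$; the same argument gives $h_W=0$. Also $\vartheta=\mathbb E[\mathbb E(r_V\mid Z)\,\mathbb E(r_W\mid Z)]=0$, so $\psi=r_Vr_W$ and $\sigma_\psi^2=\mathbb E[r_V^2r_W^2]$.
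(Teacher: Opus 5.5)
Your proposal is correct and follows essentially the same route as the paper. Both use the symmetrized order-three kernel with first projection $\psi/3$ and remainder $\mathbb E R_j^2=O(N^{-2})$, the identity $C_j-D_j=(J_j-Q_j)/(N-1)$ with $J_j-Q_j$ of mean $\Gamma$ and variance $O(N^{-1})$, a fourth-moment bound $\mathbb E W_j^4=O(N^{-2})$ for the group variance, and Slutsky. Two blemishes are only cosmetic: your first $O(M/N^2)$ estimate in Step~3 just dropped the $1/M$ from averaging independent groups, so your corrected count is the plain calculation rather than a sharper argument; and $D_i$ already equals the distinct-index average of $(a_{ij}-m_V(Z_i))(b_{il}-m_W(Z_i))$, so there are no extra order-two terms.
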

The group variance accounts for focal and both reference roles without estimating $h_V,h_W$.
At fixed $N$ with positive group variance, the ordinary group CLT applies.
\par\smallskip
\noindent\begin{minipage}{\columnwidth}
\begin{corollary}[Estimated conditional means]\label{cor:estimated-reference}
Replace the oracle means in Theorem~\ref{thm:reference-main} by
fits $f,g\in[0,1]$ trained on a common sample independent of all evaluation groups. If
$\varepsilon_f+\varepsilon_g=o_p(1)$ and
$\sqrt{MN}\,\varepsilon_f\varepsilon_g=o_p(1)$,
all its studentized limits remain valid.
\end{corollary}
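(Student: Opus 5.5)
The plan is to write each studentized statistic as its oracle version plus a remainder that is negligible at the $\sqrt{MN}$ scale, and to show the remainder leaves the variance estimate unchanged to first order. Conditioning on the training sample, which is independent of all evaluation groups, treats $f,g$ as fixed functions. Then Proposition~\ref{prop:reference-main} applies conditionally, so $\mathbb E[D_i\mid H]=\vartheta+b_H$.

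\textbf{Step 1: decompose the score.} Write $f=m_V-\delta_f$ and $g=m_W-\delta_g$. Expanding $(U_{V,i}-f)(U_{W,i}-g)$ gives
\[
D_i=D_i^{o}+\delta_f(Z_i)(U_{W,i}-m_W(Z_i))+\delta_g(Z_i)(U_{V,i}-m_V(Z_i))+\delta_f(Z_i)\delta_g(Z_i),
\]
where $D_i^{o}$ is the oracle score. The reference correction in \eqref{eq:reference-correction-main} does not involve $f,g$, so it cancels from the remainder. The same expansion holds for $C_i$.

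\textbf{Step 2: bound the remainder terms.}
\begin{itemize}
\item The last term has conditional mean $b_H$, with $|b_H|\le\varepsilon_f\varepsilon_g=o_p((MN)^{-1/2})$.
\item Its fluctuation around $b_H$ has conditional variance at most $\mathbb E[\delta_f^2\delta_g^2]/(MN)$, because summands are independent across groups and rows enter through $Z_i$ only. Since $|\delta|\le1$, this is $\le\varepsilon_f^2/(MN)$, so the fluctuation is $o_p((MN)^{-1/2})$.
\item For the cross term $\delta_f(Z_i)(U_{W,i}-m_W(Z_i))$, split $U_{W,i}-m_W=(U_{W,i}-R_{W,i})+r_{W,i}$.
  \begin{itemize}
  \item The part $\delta_f r_W$ has conditional mean zero given $Z$ and conditional variance $\le\varepsilon_f^2/(MN)$, so it is $o_p((MN)^{-1/2})$.
  \item For the part $\delta_f(Z_i)(U_{W,i}-R_{W,i})$, note $U_{W,i}-R_{W,i}=k^{-1}\sum_{j\ne i}\{b_{ij}-F_W(W_i)\}$ has mean zero given $O_i$. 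The average over $i$ in a group is a degenerate-plus-Hájek U-statistic in the references. Its Hájek projection onto $O_j$ is $\mathbb E[\delta_f(Z')(a(W',W_j)-F_W(W'))]$, bounded in $L^2$ by $\varepsilon_f$. So this part also has variance $O(\varepsilon_f^2/(MN))$.
  \end{itemize}
\end{itemize}
Hence $\sqrt{MN}(\bar D-\bar D^{o})=o_p(1)$, and similarly for $\bar C$.

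\textbf{Step 3: variance estimate and conclusions.} The same second-moment bounds apply group by group. They give $N\,\mathbb E[(A_j-A_j^{o})^2\mid H]=o_p(1)$, up to the squared bias term $N b_H^2=o_p(1/M)$. By Cauchy--Schwarz, $N\widehat v_A-N\widehat v_A^{o}=o_p(1)$. Slutsky's lemma then transfers every studentized limit in Theorem~\ref{thm:reference-main}. The drift $\Gamma\sqrt\lambda/\sigma_\psi$ for $T_C$, and the consistency statements, are unchanged because the perturbation of $\sqrt M\bar A/\sqrt{\widehat v_A}$ is $o_p(1)$.

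\textbf{Main obstacle.} The step I expect to be the main obstacle is the cross term with reused references (the last sub-case of Step 2). There I must verify that its projection onto the reference observations is controlled by $\varepsilon_f$ alone. The concern is that the $\delta_f$ weights and the rank comparisons might combine to produce a non-vanishing order-$N^{-1/2}$ piece. I would first attempt the explicit Hoeffding decomposition written above. The key fact is $\mathbb E[b_{ij}\mid O_i]=F_W(W_i)$, which makes the first-order term in $O_i$ vanish.
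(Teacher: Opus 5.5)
Your proposal is correct and follows the paper's argument. Both isolate the fitted-minus-oracle group difference $\Delta_j$, which is identical for $C$ and $D$ because the reuse correction does not involve $f,g$. Both then show it has conditional mean $b_H$ and conditional variance $O((\varepsilon_f^2+\varepsilon_g^2)/N)$, control the bias by $|b_H|\le\varepsilon_f\varepsilon_g$, and transfer the variance estimator by sample-covariance Cauchy--Schwarz before applying Slutsky. Your H\'ajek-projection treatment of the reference-role cross term simply spells out the step the paper compresses into ``bounded comparisons and clipped fits.''
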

\end{minipage}
\par\smallskip

\input{paper/reference_certificate_main}

%% file: paper/reference_certificate_main.tex
\paragraph{Learning and sampling allowances}
For $C_Z$ categories, $m_{\rm val}$ validation pairs $(O,O')$ are independent of training
and evaluation. Focal $Z,V,W$ and reference $V',W'$ supply comparisons
with conditional means $m_V(c),m_W(c)$. With $n_c$ focal rows, their
Hoeffding intervals have radius $\sqrt{\log(8C_Z/\delta)/(2n_c)}$ and
are intersected with $[0,1]$;
empty cells receive $[0,1]$.
Let $q_c^+,q_c^-$ be the largest and smallest of the four endpoint products
$(x-f(c))(y-g(c))$ over these two intervals. Obtain binomial upper limits
$u_c$ for category probabilities, each with error $\delta/(2C_Z)$
\cite{clopper1934use}. Define
\begin{equation}\label{eq:category-budget}
 \widehat B_+=\max_{p\in\mathcal P}\sum_c p_cq_c^+,
 \qquad \widehat B_-=\min_{p\in\mathcal P}\sum_c p_cq_c^-.
\end{equation}
Here $\mathcal P=\{p\ge0:\mathbf1^Tp=1,\ p\le u\}$.
Sorted coefficient allocation gives the extrema, including negative terms.
\par\begingroup\interlinepenalty=10000
With probability at least $1-\delta$,
$\widehat B_-\le b_H\le\widehat B_+$. The signed upper allowance can be negative and is no larger than
$\widehat B_{\rm abs}$ from the largest absolute endpoint errors.\par\endgroup

\input{paper/aggregate_validation_main}
The certificate combines this validation allowance with an independent
evaluation-sample bound.
\begin{proposition}[Observable reference certificate]\label{prop:category-certificate}
Under the independent common-law reference design, let $\widehat B$ be an
independently validated upper bound on $b_H$ with failure probability $\delta$.
Let $R$ be the range
of $(a-f(c))(b-g(c))$ over $a,b\in[0,1]$ and all categories.
Fix disjoint triples before inspecting their values. Let $s^2$ be the
sample variance of their six-role symmetric kernels, with denominator
$J-1$, where $J=M\lfloor N/3\rfloor\ge2$.
For $0<\delta<\alpha<1$, write $x=\log\{2/(\alpha-\delta)\}$ and
\begin{align}
 c_J&=\frac{2R}{\sqrt{J(J-1)}}+\frac{R}{3J},\nonumber\\
 r_U(\alpha)&=\min\left\{R\sqrt{\frac{x}{2J}},\,
 s\sqrt{\frac{2x}{J}}+c_Jx\right\},\label{eq:variance-radius}\\
 L_\alpha&=\bar D-\widehat B-r_U(\alpha).\label{eq:category-certificate}
\end{align}
Then $\Pr(L_\alpha>\vartheta)\le\alpha$, including ties and zero
kernel variance. The mean $\bar D$ uses all distinct triples; only its
variance estimate uses the fixed disjoint triples.
\end{proposition}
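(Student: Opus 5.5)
The certificate splits the error into a learning part and a sampling part, and a union bound combines them. By Proposition~\ref{prop:reference-main}, $\mathbb E[D_i\mid H]=\vartheta+b_H$. Since $\bar D$ is an average of the $D_i$, the plan is to write $\bar D$, given $H$, as a group-wise complete U-statistic whose symmetrized six-role kernel has conditional mean $\vartheta+b_H$ and range contained in an interval of width $R$. Then
\[
\{L_\alpha>\vartheta\}\subseteq\{\widehat B<b_H\}\cup\{\bar D-(\vartheta+b_H)>r_U(\alpha)\}.
\]
The first event has probability at most $\delta$ by hypothesis. The validation sample is independent of the evaluation groups, so we may condition on $H$ and on the validation data throughout. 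It then suffices to show that the second event has conditional probability at most $\alpha-\delta$.

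For the sampling part I would use Hoeffding's decomposition of a U-statistic into an average of averages over disjoint blocks. Within each group of $N$ i.i.d.\ triples, the correction in \eqref{eq:reference-correction-main} replaces $U_{V,i}U_{W,i}$ by an average over ordered distinct pairs $j\neq l$ of references, $a_{ij}b_{il}$. Thus $D_i$ averages, over distinct $(i,j,l)$, a function of three observations. Symmetrizing over the six role assignments gives the six-role kernel $\kappa$, and $\bar D$ is the complete U-statistic of order three pooled over the $M$ groups. Each term of $\kappa$ has the form $(a-f(c))(b-g(c))$ with $a,b\in[0,1]$, so $\kappa$ takes values in an interval of width at most $R$. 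Hoeffding's representation writes $\bar D$ as an average, over permutations within groups, of the mean of $J=M\lfloor N/3\rfloor$ independent kernel evaluations on disjoint triples. Since $\exp$ is convex, the moment generating function of $\bar D$ is at most that of the disjoint-triple mean. This gives the pure-range radius $R\sqrt{x/(2J)}$ with failure probability $(\alpha-\delta)/2$, and in fact with $(\alpha-\delta)$ one-sided. I would allocate $(\alpha-\delta)/2$ to each branch of the minimum, so that the minimum is valid by a union bound. This matches $x=\log\{2/(\alpha-\delta)\}$.

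The variance branch is the main obstacle. The sample variance $s^2$ is computed from the fixed disjoint triples, while $\bar D$ is the complete U-statistic, so the empirical-Bernstein bound of Maurer--Pontil does not apply directly. I would first apply Bernstein's inequality to the complete U-statistic through the same convexity reduction. Its moment generating function is dominated by that of the disjoint mean, so Bernstein with the true kernel variance $\sigma^2$ and range $R$ holds at radius $\sigma\sqrt{2x/J}+Rx/(3J)$.

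Next, I would bound $\sigma$ above by $s$ plus a deviation term, using Maurer--Pontil's concentration of the sample standard deviation for $J$ i.i.d.\ bounded variables: $\sigma\le s+R\sqrt{2x/(J-1)}$ with the remaining failure probability. Substituting gives $s\sqrt{2x/J}+2Rx/\sqrt{J(J-1)}+Rx/(3J)=s\sqrt{2x/J}+c_Jx$. The delicate points are the failure-probability bookkeeping, so that the Bernstein and variance events together with the range branch total $\alpha-\delta$, and the constant $2R/\sqrt{J(J-1)}$. I would check both against Maurer--Pontil's Theorem~4 with the log factor taken as $x$.

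Ties are harmless because $a$ includes the half-weight and every bound uses only the range and independence. When the kernel variance is zero, $s=0$ and the Bernstein branch still holds, giving a deterministic radius $c_Jx$. The case where the range $R$ is zero makes $\bar D$ equal to its conditional mean almost surely. Integrating the conditional bound over $H$ and the validation data then yields $\Pr(L_\alpha>\vartheta)\le\alpha$.
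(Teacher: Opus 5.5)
Your architecture is the paper's: the inclusion $\{L_\alpha>\vartheta\}\subseteq\{\widehat B<b_H\}\cup\{\bar D-(\vartheta+b_H)>r_U(\alpha)\}$, $\bar D$ as a pooled order-three U-statistic with kernel range $R$, permutation averaging plus convexity of $\exp$ to transfer the disjoint-triple Hoeffding and Bernstein moment bounds, and the sample-standard-deviation bound $\sigma\le s+R\sqrt{2x/(J-1)}$. That last bound is Theorem~10 of Maurer--Pontil. Their Theorem~4 already splits its own budget and gives the $7/\{3(J-1)\}$ constant rather than $c_J$, although your substitution producing $c_J$ is correct.

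The gap is the failure-probability bookkeeping you leave open, and the allocation you propose cannot meet the stated $x$. With $x=\log\{2/(\alpha-\delta)\}$, each of the three tail events you invoke can have probability up to $e^{-x}=(\alpha-\delta)/2$. These are Hoeffding, Bernstein with the true $\sigma$, and the variance-estimation event. A union bound over the two branches of the minimum therefore gives $e^{-x}+(e^{-x}+e^{-x})=\tfrac32(\alpha-\delta)$ for the sampling part. The total is then $\delta+\tfrac32(\alpha-\delta)>\alpha$. Squeezing the empirical-Bernstein branch into $(\alpha-\delta)/2$ would instead force $\log\{4/(\alpha-\delta)\}$ there, which is not the stated radius.

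The missing observation is that, conditional on independent training and validation (the paper's $\mathcal H_{\rm val}$), the kernel variance $\sigma^2$ is a fixed number. So the oracle radius $r^*=\min\{R\sqrt{x/(2J)},\,\sigma\sqrt{2x/J}+Rx/(3J)\}$ is deterministic. You may apply only whichever inequality attains $r^*$, so $\Pr\{\bar D-(\vartheta+b_H)>r^*\mid\mathcal H_{\rm val}\}\le e^{-x}$; you never pay for both Hoeffding and Bernstein.

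On the Maurer--Pontil event, whose failure probability is at most $e^{-x}$, we have $\sigma\sqrt{2x/J}+Rx/(3J)\le s\sqrt{2x/J}+c_Jx$, hence $r^*\le r_U(\alpha)$. Thus $\{\bar D-(\vartheta+b_H)>r_U(\alpha)\}$ is contained in $\{\bar D-(\vartheta+b_H)>r^*\}\cup\{\sigma>s+R\sqrt{2x/(J-1)}\}$. This costs $2e^{-x}=\alpha-\delta$ and needs no independence between $s$ and $\bar D$. Adding the validation failure $\delta$ gives $\Pr(L_\alpha>\vartheta)\le\alpha$.
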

This combines classical U-statistic concentration and variance bounds
\cite{maurer2009empirical,peel2010empirical} with either preselected allowance.
Use $\widehat B=\widehat B_+$ or $\widehat B_{\rm agg}$; taking their random
minimum requires a joint error budget. Define $p_{\rm cert}=\inf(\{\delta<\alpha<1:L_\alpha>0\}\cup\{1\})$. It is super-uniform
under $\vartheta\le0$. A range-only alternative replaces $r_U$ by
$R\sqrt{\log\{1/(\alpha-\delta)\}/(2J)}$ and also allows $J=1$.
Choose the construction before evaluation. For BY at level $\alpha_{\rm FDR}$,
$\delta=\rho\alpha_{\rm FDR}/(K H_K)$, $0<\rho<1$, places the $p$-value floor below
the first family threshold; $H_K=\sum_{j=1}^K1/j$.

\paragraph{Sufficient evaluation budget}
Apply Lemma~\ref{lem:aggregate-allowance} at $\delta$ to each sign of
stream $A$, including its corner bounds. The resulting interval
$[B_-,B_+]$ has $B_+=\widehat B_{\rm agg}$ and joint coverage
at least $1-2\delta$. On this event, put $W=B_+-B_-$. For
$\vartheta\ge\vartheta_0>W$ and $0<\eta<1$, the range-only rule
has conditional power at least $1-\eta$ if
\begin{equation}\label{eq:aggregate-power-cost}
 J>\frac{R^2}{2(\vartheta_0-W)^2}
 \left[\sqrt{\log\frac1{\alpha-\delta}}+
       \sqrt{\log\frac1\eta}\right]^2.
\end{equation}
Here $\alpha$ is a marginal cutoff; $\alpha=\alpha_{\rm FDR}/(K H_K)$ suffices for
first-threshold BY retention. Cost:
$n_{\rm train}+2m_{\rm val}+MN$ raw draws. Choose $M,N$ before fresh
evaluation; $\vartheta_0$ is a planning alternative, not an estimated effect.
Unconditional power also depends on the random width $W$.

%% file: paper/aggregate_validation_main.tex
\paragraph{Validate the aggregate when category masses are known}
The learning bias is one weighted sum. Suppose exact category probabilities
$p_c$ are known from control metadata or design; complete forecast or
outcome rank metadata are not required. Split independent validation pairs into two
streams before inspecting values. In category $c$, average
$a(V,V')-f(c)$ in stream $A$ and $a(W,W')-g(c)$ in stream $B$, giving
$\widehat a_c,\widehat b_c$ and counts $n_{Ac},n_{Bc}$.
The total $m_{\rm val}=\sum_c(n_{Ac}+n_{Bc})$ pairs cost $2m_{\rm val}$ raw observations.
Their product estimates the bias to subtract (Fig.~\ref{fig:reference-certificate}).

\begin{lemma}[Aggregate validation allowance]\label{lem:aggregate-allowance}
Condition on training and all focal validation category assignments. For
$\mathcal S=\{c:n_{Ac}n_{Bc}>0\}$, let $x=\log(3/\delta)$,
$d_c=p_c/(4\sqrt{n_{Ac}n_{Bc}})$ and
\begin{align}
 r_A^2&=\frac{x}{2}\sum_{c\in\mathcal S}\frac{p_c^2\widehat b_c^2}{n_{Ac}},
 &r_B^2&=\frac{x}{2}\sum_{c\in\mathcal S}\frac{p_c^2\widehat a_c^2}{n_{Bc}},\nonumber\\
 r_{AB}&=\sqrt{2x\sum_{c\in\mathcal S}d_c^2}+x\max_{c\in\mathcal S}d_c.\label{eq:aggregate-radii}
\end{align}
Take empty sums and maxima as zero. Let $q_c^{\rm full}$ maximize
$(u-f(c))(v-g(c))$ over $u,v\in\{0,1\}$. Then
\begin{equation}\label{eq:aggregate-budget}
 \widehat B_{\rm agg}=\sum_{c\in\mathcal S}p_c\widehat a_c\widehat b_c
 +r_A+r_B+r_{AB}+\sum_{c\notin\mathcal S}p_cq_c^{\rm full}
\end{equation}
satisfies $\Pr(b_H>\widehat B_{\rm agg})\le\delta$, including ties.
\end{lemma}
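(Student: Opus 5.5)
The plan is to write $b_H$ as a sum over categories of products of cell-level errors. Each product is then bounded by its plug-in estimate plus three error terms, one per concentration event, and each event gets failure probability $\delta/3$; this matches $x=\log(3/\delta)$.

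\textbf{Step 1: cell-level representation.} Given $H$, $f$ and $g$ are fixed functions of the category. For an independent reference $(V',W')$ from the common law, $\mathbb E[a(V,V')\mid V]=F_V(V)$ holds with the midrank convention, including ties. Hence $m_V(c)=\mathbb E[a(V,V')\mid Z=c,H]$, and
\[
 b_H=\sum_c p_c\alpha_c\beta_c,\qquad
 \alpha_c=m_V(c)-f(c),\quad \beta_c=m_W(c)-g(c).
\]
Condition on training and on all focal validation category assignments. Within cell $c$ of stream $A$, the terms $a(V,V')-f(c)$ are then i.i.d. with mean $\alpha_c$ and values in $[-f(c),1-f(c)]$, an interval of length one. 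The same holds for stream $B$ with $\beta_c$. The two streams are independent, and distinct cells are independent.

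\textbf{Step 2: cells outside $\mathcal S$.} Here $\alpha_c\beta_c$ is a bilinear function of $(\alpha_c,\beta_c)$ on the box $[-f,1-f]\times[-g,1-g]$. A bilinear function attains its maximum at a corner, so $\alpha_c\beta_c\le q_c^{\rm full}$ deterministically.

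\textbf{Step 3: cells in $\mathcal S$.} Use the exact identity
\[
 \alpha_c\beta_c=\widehat a_c\widehat b_c-(\widehat a_c-\alpha_c)\widehat b_c-\widehat a_c(\widehat b_c-\beta_c)+(\widehat a_c-\alpha_c)(\widehat b_c-\beta_c).
\]
After multiplying by $p_c$ and summing over $\mathcal S$, three error sums remain.

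The term $T_A=\sum_c p_c\widehat b_c(\widehat a_c-\alpha_c)$ is handled conditionally on stream $B$. It is then a weighted sum of independent stream-$A$ summands. In cell $c$ each summand has range $p_c|\widehat b_c|/n_{Ac}$, and there are $n_{Ac}$ of them. Hoeffding's inequality gives $-T_A\le r_A$ with probability at least $1-\delta/3$. Since $r_A$ is $B$-measurable, the bound also holds unconditionally.

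Symmetrically, conditioning on stream $A$ handles $T_B$ with radius $r_B$.

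\textbf{Step 4: the product term (main obstacle).} The remaining term is $T_{AB}=\sum_c p_c(\widehat a_c-\alpha_c)(\widehat b_c-\beta_c)$, a sum of independent, mean-zero products of two independent centered averages. By Hoeffding's lemma, $\widehat a_c-\alpha_c$ is sub-Gaussian with variance proxy $1/(4n_{Ac})$, and likewise $\widehat b_c-\beta_c$ with proxy $1/(4n_{Bc})$.

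First take the conditional expectation in stream $A$. This gives
\[
 \mathbb E e^{\lambda p_c(\cdot)(\cdot)}\le\mathbb E\exp\{\lambda^2p_c^2(\widehat b_c-\beta_c)^2/(8n_{Ac})\}.
\]
I would then bound the exponential moment of a squared sub-Gaussian variable. The target is the sub-gamma bound $\log\mathbb E e^{\lambda Y_c}\le \lambda^2 d_c^2/\{2(1-\lambda d_c)\}$ for $0<\lambda<1/d_c$, with $d_c=p_c/(4\sqrt{n_{Ac}n_{Bc}})$. As a consistency check, the variance of the product is at most $p_c^2/(16n_{Ac}n_{Bc})=d_c^2$.

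The constants here are the delicate part. If the Gaussian-chaos bound loses a factor, I would fall back to proving the sub-gamma inequality directly. The route would be to expand $\mathbb E(XY)^k\le \mathbb E X^k\,\mathbb E Y^k$ and use sub-Gaussian moment bounds to verify Bernstein's moment condition. Summing over cells gives variance factor $\sum_c d_c^2$ and scale $\max_c d_c$. The standard sub-gamma tail then yields $T_{AB}\le r_{AB}$ with probability at least $1-\delta/3$.

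\textbf{Conclusion.} A union bound over the three events in Steps 3 and 4 gives
\[
 b_H\le \sum_{\mathcal S}p_c\widehat a_c\widehat b_c+r_A+r_B+r_{AB}+\sum_{c\notin\mathcal S}p_cq_c^{\rm full}=\widehat B_{\rm agg}
\]
with probability at least $1-\delta$. Ties enter only through the midrank identity in Step 1.
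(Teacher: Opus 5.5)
Your proposal is essentially the paper's proof. It uses the same four-term decomposition, Hoeffding control of the two linear error sums by conditioning on the opposite stream, a sub-gamma bound on the product term with variance factor $\sum_c d_c^2$ and scale $\max_c d_c$, a union bound with $e^{-x}=\delta/3$ per event, and the deterministic corner bound off $\mathcal S$. Step~4 closes exactly with the Gaussian linearization the paper uses: write $\exp\{\lambda^2p_c^2Y^2/(8n_{Ac})\}$ as a Gaussian average and apply the sub-Gaussian bound for $Y=\widehat b_c-\beta_c$ to get $(1-\lambda^2d_c^2)^{-1/2}$, then $-\tfrac12\log(1-u^2)\le u^2/\{2(1-u)\}$ for $0\le u<1$ gives your target, so the moment-based fallback is unnecessary (and would not reproduce these constants).
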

The radii bound the two linear errors and their product, respectively.
This is a finite rank-target specialization of classical second-order
bias estimation~\cite{robins2017higher,liu2024falsification} and concentration.
It replaces simultaneous mean rectangles by one observable aggregate bound.
Exact category masses are essential; plugging in empirical frequencies
requires another uncertainty argument.

%% file: paper/fixed_peer_sampling.tex
\paragraph{Peer sampling model}
The reference correction depends on the sampling design.
Condition on a fixed set of $N$ entity attributes, assume that entity pairs
are independent across entities and that the two channels are independent
within every entity, and let $f_i=\mathbb E[U_{V,i}]$ and
$g_i=\mathbb E[U_{W,i}]$ be the exact entity means. Set
$p_{ij}=\mathbb E[a_{ij}]$, $q_{ij}=\mathbb E[b_{ij}]$; covariance weights
each of the $k$ peers equally. Then
\begin{equation}\label{eq:fixed-peers-main}
\mathbb E[C_i]=0,\qquad
\mathbb E[D_i]=-\frac{\operatorname{Cov}_{j\ne i}(p_{ij},q_{ij})}{k-1}.
\end{equation}
Distinct references
remove sampling bias for i.i.d.\ resampled peers but can introduce it for
fixed heterogeneous peers. Seeing the same entities repeatedly does not determine which sampling model is appropriate.

%% file: paper/feedback_identity.tex
\subsection{Feedback through entity effects}
\label{sec:feedback-main}
Complementary fitting lets earlier outcomes enter evaluated forecasts and
evaluated outcomes enter later forecasts. The following identity isolates
these routes and the product of fitting errors.

\begin{proposition}[Feedback under temporal splitting]\label{prop:feedback-main}
For $T$ observations in contiguous folds with deterministic weights, let
$y_r=\alpha+\varepsilon_r$ and $x_r=\sum_{j<r}w_{rj}y_j+\eta_r$.
The square-integrable shocks have mean zero and variance $\sigma^2>0$;
they are mutually uncorrelated and uncorrelated with $\alpha$ and every $\eta_s$.
Set $w_{sj}=0$ for $j\ge s$. For evaluation fold $I$, let $E,L$ be all
earlier and later rows, $O=E\cup L$ and $m=|O|>0$.
\par\begingroup\interlinepenalty=10000
With both entity means fitted on $O$, set
$S_I=\sum_{r\in I}(x_r-\bar x_O)(y_r-\bar y_O)$. Then
\begin{equation}\label{eq:feedback-main}
\begin{split}
\frac{\mathbb E S_I}{\sigma^2}
={}&-\frac{\sum_{r\in I,j\in E}w_{rj}}{m}
     -\frac{\sum_{s\in L,r\in I}w_{sr}}{m}\\
&+\frac{|I|}{m^2}\sum_{s,j\in O}w_{sj}.
\end{split}
\end{equation}
\endgroup
If $E,L$ are nonempty, then
$\mathbb E[(x_r-\bar x_E)(y_r-\bar y_L)]=0$ for every $r\in I$.
This latter result also holds for any square-integrable forecast measurable
from $\alpha$, the forecast noises and outcomes before $r$, provided
$\mathbb E[\varepsilon_s\mid\alpha,\eta_1,\ldots,\eta_T,
\varepsilon_1,\ldots,\varepsilon_{s-1}]=0$ for every $s$.
\end{proposition}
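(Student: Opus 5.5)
The statement is Proposition~\ref{prop:feedback-main}. The plan is a direct second-moment computation. Only three kinds of pairwise moments survive: $\mathbb E[\varepsilon_a\varepsilon_b]=\sigma^2\mathbf 1\{a=b\}$, the $\alpha$-moments, and the $\eta$-moments.

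\textbf{Step 1: write the residuals in terms of shocks.} Since every row shares the same $\alpha$, it cancels from $y_r-\bar y_O=\varepsilon_r-\bar\varepsilon_O$. Similarly,
\[
x_r-\bar x_O=\sum_j w_{rj}y_j+\eta_r-\frac1m\sum_{s\in O}\Big(\sum_j w_{sj}y_j+\eta_s\Big).
\]
Each $y_j$ in this expression still contains $\alpha$. Multiply the two factors and take expectations. Because $\varepsilon$ is uncorrelated with $\alpha$ and with every $\eta$, only the $\varepsilon\times\varepsilon$ cross moments remain, so
\[
\mathbb E S_I=\sigma^2\sum_{r\in I}\Big[\sum_j w_{rj}\big(\mathbf 1\{j=r\}-\tfrac1m\mathbf 1\{j\in O\}\big)-\frac1m\sum_{s\in O}\sum_j w_{sj}\big(\mathbf 1\{j=r\}-\tfrac1m\mathbf 1\{j\in O\}\big)\Big].
\]

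\textbf{Step 2: evaluate each term.}
\begin{itemize}
\item $w_{rr}=0$, so the first indicator contributes nothing.
\item For $r\in I$, $w_{rj}\ne0$ only when $j<r$. With contiguous folds, the rows $j<r$ lying in $O$ are exactly those in $E$. This gives $-\sum_{r\in I,j\in E}w_{rj}/m$.
\item The term $-\frac1m\sum_{s\in O}w_{sr}$ with $r\in I$ is nonzero only for $s>r$, i.e.\ $s\in L$. This gives the second term.
\item The double-centering term $+\frac{|I|}{m^2}\sum_{s,j\in O}w_{sj}$ is the third term.
\end{itemize}
This yields \eqref{eq:feedback-main}. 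The only care needed is the bookkeeping of which index ranges survive the triangularity $w_{sj}=0$ for $j\ge s$ together with contiguity. I expect this to be routine.

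\textbf{Step 3: the split-mean claim under the linear model.} Write $\mathbb E[(x_r-\bar x_E)(y_r-\bar y_L)]$ and note that $y_r-\bar y_L=\varepsilon_r-\bar\varepsilon_L$, again because $\alpha$ cancels.
\begin{itemize}
\item $x_r$ involves $\alpha$, the $\eta$'s, and $\varepsilon_j$ with $j<r$.
\item $\bar x_E$ involves $\alpha$, the $\eta$'s, and $\varepsilon_j$ with $j<\min E$, hence $j<r$.
\item $\varepsilon_r$ and every $\varepsilon_s$ with $s\in L$ have indices $\ge r$, and every index in $L$ exceeds $r$.
\end{itemize}
All pairings therefore involve distinct shock indices and are uncorrelated, so the expectation is zero.

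\textbf{Step 4: the general martingale-difference extension.} This is where I expect the main obstacle: we must replace uncorrelatedness with a conditional-mean argument. Let $\mathcal G_{s}=\sigma(\alpha,\eta_{1:T},\varepsilon_{1:s-1})$. The factor $F=x_r-\bar x_E$ is $\mathcal G_r$-measurable: $x_r$ is measurable by assumption, and each $x_e$ with $e\in E$ is $\mathcal G_e\subseteq\mathcal G_r$-measurable. Both $F$ and the $\varepsilon_s$ are square-integrable, so the products are integrable. For each $s\ge r$ we have $\mathcal G_r\subseteq\mathcal G_s$, and the tower property gives
\[
\mathbb E[F\varepsilon_s]=\mathbb E\big[F\,\mathbb E[\varepsilon_s\mid\mathcal G_s]\big]=0.
\]
Since $\alpha$ cancels, $y_r-\bar y_L=\varepsilon_r-\frac1{|L|}\sum_{s\in L}\varepsilon_s$, and summing over $s\in\{r\}\cup L$ finishes the argument. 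The point to check is that $\alpha$ really cancels in $y_r-\bar y_L$, which is what the model gives; $F$ may depend on $\alpha$ arbitrarily.
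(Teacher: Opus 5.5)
Your proof is correct and follows the paper's own route. You cancel $\alpha$ to get $y_r-\bar y_O=\varepsilon_r-\bar\varepsilon_O$, compute the per-row covariance, and use triangularity plus contiguity to reduce the two sums to $E$ and $L$; the split product is then handled by index separation in the linear case and by the filtration $\mathcal G_s$ with the tower property in the nonlinear case. One small slip: in Step~3, $\bar x_E$ involves $\varepsilon_j$ with $j<\max E$, not $j<\min E$, but every $e\in E$ precedes $r$, so your conclusion $j<r$ still holds.
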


The first two terms oppose nonnegative forecast weights; the third can reverse their sign.
Temporal splitting fits forecast effects earlier and outcome effects later,
adapting recursive adjustment and forward orthogonal deviations~\cite{so1999recursive,arellano1995another}.
This centers raw outcomes with entity means, not the experimental rank/baseline-adjusted panels.
The companion preserves the equal-fold closed form and its derivation.

\begin{corollary}[Independent-entity inference]\label{cor:directional}
Let $S_g$ be entity $g$'s mean directional product over fixed eligible rows
with nonempty earlier and later folds. Suppose entities, including their
forecast noises, are independent and satisfy Proposition~\ref{prop:feedback-main}'s
centering conditions. If $\sup_g\mathbb E|S_g|^4<\infty$ and
$G^{-1}\sum_{g=1}^G\operatorname{Var}(S_g)\to v>0$, then
$\sqrt G\,\bar S/\widehat{\operatorname{sd}}(S)\Rightarrow\mathcal N(0,1)$.
\end{corollary}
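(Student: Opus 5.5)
The plan is a martingale central limit theorem for independent entities with nonidentical distributions. Proposition~\ref{prop:feedback-main} supplies the zero mean, and the moment conditions supply Lyapunov's condition and consistency of the variance estimator. Write $\bar S=G^{-1}\sum_g S_g$ and $\widehat{\operatorname{sd}}(S)^2=(G-1)^{-1}\sum_g(S_g-\bar S)^2$.

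\textbf{Step 1: each entity's product has mean zero.} For every fixed eligible row $r$ of entity $g$, the earlier and later folds are nonempty by assumption. The centering conditions of Proposition~\ref{prop:feedback-main} hold within each entity, so its directional identity gives $\mathbb E[(x_r-\bar x_E)(y_r-\bar y_L)]=0$. The eligible rows are fixed in advance, so averaging over them gives $\mathbb E S_g=0$. The fourth-moment bound ensures that $S_g$ is square integrable, so these expectations are well defined.

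\textbf{Step 2: asymptotic normality of the sum.} Entities are independent, including their forecast noises, so the $S_g$ are independent mean-zero variables. Put $s_G^2=\sum_g\operatorname{Var}(S_g)$; by assumption $s_G^2/G\to v>0$. Lyapunov's condition with exponent $2+\delta=4$ reads
\[
 \frac{\sum_g\mathbb E S_g^4}{s_G^4}\le\frac{G\sup_g\mathbb E S_g^4}{(Gv(1+o(1)))^2}\to0 .
\]
Hence $\sqrt G\,\bar S/\sqrt{s_G^2/G}\Rightarrow\mathcal N(0,1)$.

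\textbf{Step 3: consistency of the variance estimator.} I will show $\widehat{\operatorname{sd}}(S)^2-s_G^2/G\to_p0$. First consider $G^{-1}\sum_g(S_g^2-\mathbb E S_g^2)$. Its terms are independent with variance at most $\sup_g\mathbb E S_g^4<\infty$, so the averaged sum has variance $O(1/G)$, and Chebyshev's inequality shows it tends to zero in probability. Since $\mathbb E S_g=0$, we have $\mathbb E S_g^2=\operatorname{Var}(S_g)$, so $G^{-1}\sum_g S_g^2-s_G^2/G\to_p0$. Next, $\bar S^2=O_p(1/G)$ by Step 2, and the factor $G/(G-1)\to1$. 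Combining these gives the claim.

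\textbf{Step 4: conclusion.} Because $v>0$, the continuous mapping theorem and Slutsky's lemma give $\sqrt G\,\bar S/\widehat{\operatorname{sd}}(S)\Rightarrow\mathcal N(0,1)$.

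The substantive step is Step 1. It needs the strengthened martingale-difference version of Proposition~\ref{prop:feedback-main}, applied row by row. It also needs the fold structure, and therefore the sets $E,L$, to be deterministic and not chosen from the data. Once that holds, the rest is a routine Lindeberg--Feller argument.
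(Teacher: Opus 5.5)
Your proposal is correct and follows essentially the paper's own argument: mean zero from the directional identity of Proposition~\ref{prop:feedback-main}, a Lyapunov ratio of order $G^{-1}$ from the uniform fourth-moment bound, Chebyshev for $G^{-1}\sum_g\{S_g^2-\mathbb E S_g^2\}$ together with $\bar S\to_p0$ for variance consistency, and Slutsky to studentize. Two small remarks: the CLT you actually use is Lyapunov's for independent, nonidentically distributed summands, not a martingale CLT; and in the linear model the uncorrelated-shock form of the Proposition already gives Step~1, so you need the martingale-difference strengthening only for nonlinear forecasts.
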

This raw-score test allows fixed observation counts per entity; centering
comes from the directional design, without consistent entity-effect fits.

The companion treats explicit lagged-source models;
Proposition~\ref{prop:trajectory} instead permits general fixed trajectory maps.

%% file: paper/4_experiment.tex
\subsection{Which parts of the audit change the evidence?}
Figure~\ref{fig:peer_sampling} contrasts resampled and fixed peers: shared scores center for fixed peers; distinct scores depend on effect alignment.

\input{paper/peer_feedback_evidence}

\input{paper/fig_certificate_efficiency}

\paragraph{Reference reuse and false positives}
In Fig.~\ref{fig:peer_sampling}(b), $Z\sim\mathrm{Bernoulli}(1/2)$ and
$V,W\mid Z$ are independent $\mathrm{Bernoulli}(0.2+0.6Z)$, giving $\vartheta=0$
and $\Gamma=0.0225$. Shared and distinct references reject $402$ and $41$
of $1{,}000$ paired panels with $8{,}192$ training rows; the corresponding
counts with $32$ training rows are $677$ and $527$.

A dependent zero-covariance example rejects $143/3{,}000$ with the full
scale and $482/3{,}000$ with the focal-only scale.

\paragraph{Added value of signed learning allowances}
The absolute/signed $\times$ range/variance ablation pairs $100$ inspected settings, each with $1{,}000$ repetitions.
At $30{,}976$ observations and eight categories
with $\vartheta\approx0.0137$, both range counts are zero; absolute and signed
variance bounds reject $274$ and $435$ times. The paired increase is $16.1$ percentage points (pointwise $95\%$ interval $[13.1,18.9]$). Figure~\ref{fig:certificate-efficiency}(a) fixes sampling precision across four budgets. All $160$ nonpositive method cells record zero rejections.

\paragraph{Comparison with classical sampling methods}
At that eight-category budget,
independent-triple and pooled-U variance counts are $433$ and $461$.
\par\begingroup\interlinepenalty=10000
A separate $300$-replication paired comparison at this budget and signed allowance uses
a fixed $64$-component betting mixture~\cite{waudby2024betting}, with
$p=\min(1,\delta+1/E)$. Betting detects $296/300$ versus $130/300$
for the variance rule: a paired gain of $55.3$ percentage points
($95\%$ interval $[47.2,61.8]$). Both null counts are $0/300$
(pointwise interval $[0,0.0123]$).\par\endgroup
\paragraph{Aggregate validation precision}
Figure~\ref{fig:certificate-efficiency}(b) uses $72$ settings, $2{,}000$ repetitions and $\delta=0.05$. \begingroup\interlinepenalty=10000
With $32$ balanced categories, $8{,}192$ validation pairs and exact fits,
median bias slack falls from $0.0185$ to $0.00260$.\par\endgroup
 Rectangles give smaller median slack in $10/72$ settings, including all
six two-category cases with opposite fitting errors. All $144$ method cells record zero noncoverage. These zero counts do not imply zero risk.

\paragraph{Temporal feedback}
In the raw-score null, gapping reverses complementary feedback from
$-0.0451$ to $0.0655$ and rejections from $0$ to $458/500$.
Directional expectation is zero; its worst tested rate is $44/500$
(Wilson $95\%$ interval $[0.066,0.116]$).

\paragraph{Complete-family error control}
BY tests $10$ or $11$ raw-score forecasts on $25$--$400$ independent
entities with $20$ observations: $1{,}000$ families, three positive controls
and an independent $5{,}000$-draw null bank per cell.
At $400$ entities and $11$ models, gapped complementary FDR falls from
$0.928$ with a Gaussian reference to $0.015$ after simulator calibration.
For three effects of $0.06$, calibrated power is $0.666$ for directional
blocks and $0.752$ for the classical orthogonal moment~\cite{arellano1995another}.
Both target the same constant effect; classical weights change the target
under heterogeneity. Calibration requires the null bank; the companion's
``Complete-family error and power'' table specifies the moments.

Additive-class misspecification gives $300/300$ nominal positives, without
establishing a zero standardized-rank target. In a near-copy null, linear-plus-spline
and $32$-bin adjustment reject $46$ and $982$ of $1{,}000$, respectively, under the same
$t_{24}$ and degrees-of-freedom correction.

\subsection{Public forecast protocol}
The $41$ forecasts use five contiguous folds, additive entity/baseline-bin controls, $\beta=1,2$ on $\{8,12,16,24,32\}$, and splines.
Undefined ranks, negligible residuals, standard error $\le10^{-10}$ or
observed copies receive $p^*=1$. Directional comparisons fix bins and copy flags; middle support uses folds $2$--$4$. Empty training sides use pooled complementary means.

Applications are Electricity~\cite{trindade2015electricity} ($11$ models),
M5~\cite{makridakis2022m5}, MovieLens-25M~\cite{harper2015movielens}
and Open Source Asset Pricing~\cite{chen2022open} ($10$ each).
Their baselines are seven-day demand, last-week sales, training
item mean and compounded past-year return. Heteroskedasticity and autocorrelation consistent (HAC) lags are $14$ daily,
$2$ weekly and $12$ monthly; ratings use user clustering.
The companion's ``Directional fitting support by task'' table reports fallback and unseen-entity shares, outside the temporal centering result.

%% file: paper/peer_feedback_evidence.tex
\begin{figure*}[t]\centering
\includegraphics[width=7in]{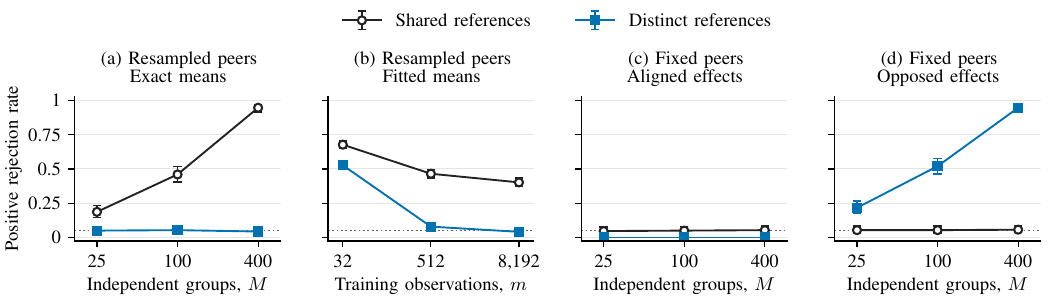}
\caption{Reference correction depends on sampling and mean estimation.
Positive null rejection rates with pointwise Wilson $95\%$ intervals;
the dotted line is $0.05$. Panels (a), (c), (d) use exact means, $N=32$
and $300$ replications. Panel (b) uses estimated means, $N=64$, $M=400$
and $1{,}000$ replications at each training size. Shared and distinct
references use the same evaluation panels. Fixed-peer effects are aligned
or opposed between the two channels.}
\label{fig:peer_sampling}
\end{figure*}

%% file: paper/fig_certificate_efficiency.tex
\begin{figure*}[t]
\centering\includegraphics[width=7in]{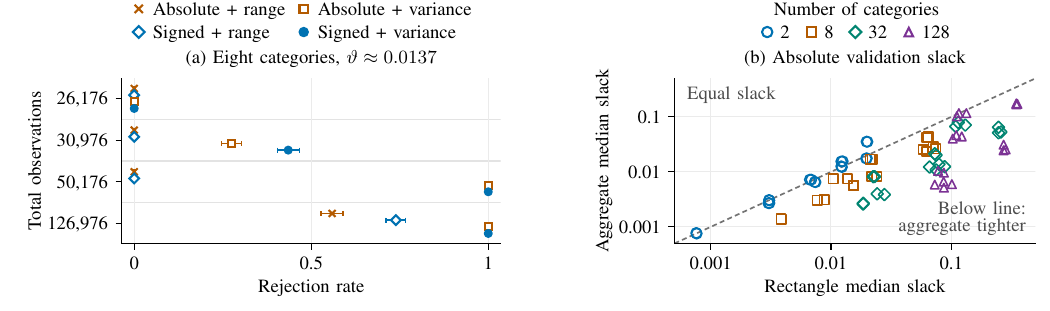}
\caption{Learning allowances and sampling precision.
(a) Eight-category learned-fit ablation at all four budgets: $1{,}000$
repetitions, pointwise exact binomial $95\%$ intervals, and small vertical
method offsets. (b) Absolute validation slack (upper bound minus true bias):
$72$ settings, $2{,}000$ repetitions each. Aggregation is tighter below equality ($62$ settings); rectangles above ($10$).
Slack alone does not determine the final association decision.}
\label{fig:certificate-efficiency}
\end{figure*}

%% file: paper/5_analysis.tex
\subsection{Same forecasts, different screening decisions}
On $68{,}000$ fixed middle-fold rows over $17$ weeks, retail price/calendar
moves from $6.41$ to $-0.82$. Croston SBA moves from $-19.43$ to $6.11$.
Retention gains three models and loses one.
Residual--fit cross terms and the product of fit changes decompose the shifts without identifying their causal feedback share.

All-fold complementary fitting retains $10/11$ electricity, $7/10$ retail,
$6/10$ ratings and $0/10$ asset-pricing models under both exponents and splines.
The ratings baseline has mean absolute error (MAE) $0.693$ and abstains;
alternating least squares has MAE $0.719$ and is retained across all four
fold/support choices. Ranks, fitting and dependence correction lack joint calibration.

\subsection{From residual evidence to forecast decisions}
Electricity boosting trained on $67{,}392$ earlier rows gives
$L_{0.05}=0.0705$ for target $0.0782$ in a $70{,}080$-row archive.
All seven bounds certify the seasonal forecast. A category-only copy
has zero target and lower bound $-0.00794$. Four categories cross weekend
status with zero/positive seasonal forecasts; $32{,}768$ draws revisit
$26{,}140$ existing rows.

Appliances energy~\cite{candanedo2017appliances} and Metro traffic
\cite{hogue2019metro} protocols were locally fixed before first download.
A midrank repair changed six copy decisions, preserving models and windows.
Each task has eight candidates and chronological training, calibration,
selection and confirmation. Gates cross $16$ training-forecast quantile bins
with weekend status, using $32{,}768$ selection-archive draws and
eight-candidate BY. Known forecast metadata permit classical
covariance kernels~\cite{schechtman1987gini} and pairs without learned means.

\input{paper/tab_public_confirmation}
With training-scale clipping (Table~\ref{tab:public-confirmation}), known-forecast
U, pairs and ungated selection choose ExtraTrees: MSE improves $16.95\%$
and $73.87\%$ over Ridge, with zero gain from gating. Reference allowances
exceed the associations, causing Ridge fallback.
\paragraph{Development diagnosis with category metadata}
\begingroup\interlinepenalty=10000
On four inspected eight-candidate families with matched metadata, fits
and $32{,}768$ draws, aggregate and fully averaged second-order allowances
both retain $13$ with betting and $9$ with pooled U. Rectangles with
betting retain $3$.\par\endgroup

\paragraph{Chronological news confirmation}
News Popularity~\cite{news2018data} fixed six candidates before source access.
On $1{,}110$ articles, five of nine gates select ungated boosting augmentation;
four return Ridge. The augmentation's clipped loss falls $0.158\%$ and raw
MSE rises $12.82\%$ versus Ridge.
No gate improves ungated clipped loss.

\paragraph{Learned trajectories and actual decisions}
Beijing PM2.5~\cite{chen2015beijing} forecasts issue at 14:00 for 15:00.
Trajectories contain the target, outcome lags $1,2,3,6,12,24$ hours
(lag 1 is the 14:00 outcome), issue-hour weather/wind and calendar fields.
Issue-time measurements are assumed available. Eight models train on
$23{,}006$ hours in 2010--2012;
selection uses $342$ daily trajectories in 2013 and evaluation $350$ in 2014.
For forecast $h$ and persistence $b$, $\phi=(h,b)$, $\psi=(Y,b)$ and
$\mathbf C=\mathbf D=\left(\begin{smallmatrix}1&0\\-1&0\end{smallmatrix}\right)$.
With-replacement draws from these $342$ trajectories implement
Proposition~\ref{prop:trajectory} for the empirical archive law:
$M=16{,}384$, $J_{\rm ref}=2$, $n=8{,}192$ with Bonferroni level $0.05/8$.
The $73{,}728$ draws revisit $2{,}069$ hourly rows, with no new labels.
\input{paper/fig_trajectory_comparison}
\input{paper/tab_trajectory_decomposition}
In this inspected-data comparison, shared-baseline ranges tighten all four methods
(Table~\ref{tab:trajectory-decomposition}). For Boosting 31, the exact
interaction is $0.00906$ of the $0.00991$ expected shared score ($91.4\%$).
Removing the unused column frees $16{,}384$ draw positions.
Reallocation within $73{,}728$ draws gives
$M=13{,}086,n=15{,}852$.
At that allocation, the sharper ranges reduce its hybrid radius from
\mbox{$0.0121$ to $0.00959$ ($20.9\%$)}, with unchanged center.

Neither shared-reference rule certifies a candidate. Triple hybrid certifies $3/8$; pooled variance certifies $4/8$, adding Boosting 7
($L=1.13\times10^{-5}$). Selected forecasts are unchanged. Pure-range
rules certify none.
Census takes
$0.010$ seconds versus pooled U $0.438$ for Boosting 31.

Shared-reference correction selects persistence (MSE $288.32$);
triple U, pooled U and ungated selection choose boosting (MSE $295.94$).
The loss gain is $7.62$ ($2.57\%$), with subsequent descriptive paired
four-week bootstrap interval $[-36.48,51.38]$ in MSE units.
Gating adds zero gain over choosing persistence directly.
An origin/target-hour mismatch was repaired after inspecting outcomes.
Both cohorts and their results are released.

\paragraph{Prespecified finite-law checks}
Two pre-simulation protocols test seven IID laws and two budgets, with
four candidates per family: the baseline, a forecast, its reversal and a constant.
Three methods each use two bounds, giving $7\times2\times3\times2=84$ cells
with $300$ repetitions each. A nonidentity forecast has zero target
and interaction $1/32$. All cells have
$0/300$ family noncoverage and false certifications
(pointwise two-sided $95\%$ interval $[0,0.0122]$). For target $1/135$ at $12{,}288$
draws, both U hybrid rules certify $300/300$ versus shared correction's
$0/300$; at $3{,}072$ draws all three certify $0/300$.
The companion gives all results under common structural ranges.

%% file: paper/tab_public_confirmation.tex
\begin{table}[t]\centering\small
\caption{Confirmation mean squared error.}
\label{tab:public-confirmation}
\setlength{\tabcolsep}{4pt}
\begin{tabular}{@{}lrr@{}}\toprule
 & \multicolumn{2}{c}{MSE ($\times10^{-3}$)}\\
Rule & Energy & Traffic\\\midrule
Ridge baseline & 20.80 & 5.252\\
Reference U / betting & 20.80 & 5.252\\
Ungated augmentation & 17.28 & 1.372\\
Known-forecast U / pairs & 17.28 & 1.372\\
Direct-loss bound & 20.80 & 1.372\\
Convex combination & 17.28 & 1.404\\
\bottomrule
\end{tabular}\par\smallskip
\begin{minipage}{\columnwidth}\footnotesize Confirmation MSE after training-scale clipping; lower is better.
Primary Ridge baseline, $1{,}008$ energy and $8{,}713$ traffic hours.
Reference rules equal Ridge; known-forecast rules equal ungated predictions.
Reference rules use the original rectangular validation protocol.\end{minipage}
\end{table}

%% file: paper/fig_trajectory_comparison.tex
\begin{figure*}[t]\centering
\includegraphics[width=7.1in]{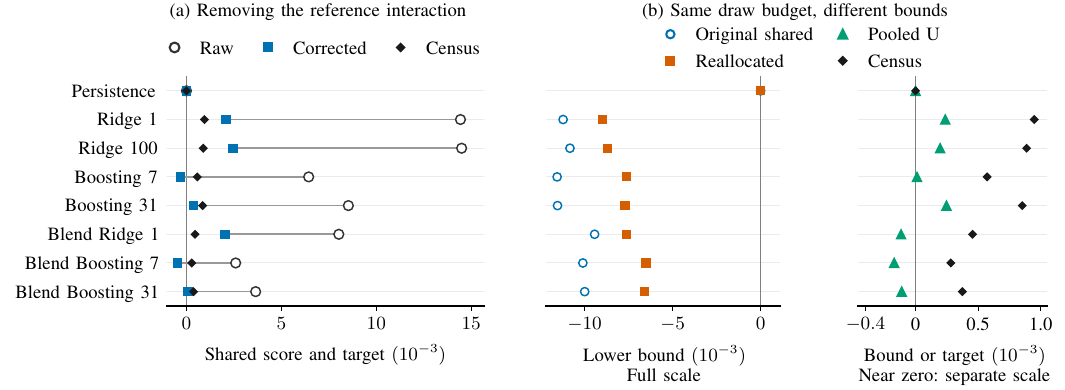}
\caption{Reference correction on the Beijing archive with shared-baseline ranges.
Left: original raw score, corrected center and exact archive target.
Right: hybrid shared bounds and the pooled variance bound; diamonds are exact targets.
Original $(M,n)=(16{,}384,8{,}192)$ and reallocated $(13{,}086,15{,}852)$ designs each use $73{,}728$ draws.
Ridge numbers are penalty strengths; boosting numbers are maximum leaf counts.
Only positive lower bounds certify candidates; Table~\ref{tab:trajectory-decomposition} also reports the direct-triple bound.}
\label{fig:trajectory-comparison}
\end{figure*}

%% file: paper/tab_trajectory_decomposition.tex
\begin{table}[t]\centering\small
\caption{Boosting 31: certificate decomposition.}
\label{tab:trajectory-decomposition}
\setlength{\tabcolsep}{2.5pt}
\begin{tabular}{@{}lrrrr@{}}\toprule
 & Original & Reallocated & Triple & Pooled\\
Quantity ($10^{-3}$) & shared & shared & U & U\\\midrule
Score mean & 8.530 & 10.355 & 0.870 & 0.847\\
Estimated interaction & 8.148 & 8.477 & 0.000 & 0.000\\
Evaluation radius & 4.041 & 4.930 & 0.653 & 0.601\\
Validation radius & 7.886 & 4.660 & 0.000 & 0.000\\
Lower bound & $-11.545$ & $-7.712$ & 0.217 & 0.246\\
\bottomrule
\end{tabular}\par\smallskip
\begin{minipage}{\columnwidth}\footnotesize
All methods use $73{,}728$ draws and the shared-baseline ranges.
Exact target: $0.852$; expected shared score: $9.909$.
Candidate level $0.05/8$; shared validation budget $0.05/16$.
Shared and triple rules use the hybrid; pooled U uses its variance bound.
Lower bound subtracts the interaction and both radii from the mean;
calculations use unrounded values.\end{minipage}
\end{table}

%% file: paper/6_conclusion.tex
Our identities locate induced association in rank-based forecast comparisons.
For fixed maps, reference laws and row sums, separate references or estimate
their interaction. When all trajectories are available, complete U-statistics
give a stronger range bound for the same target.

\paragraph{Limitations}
Observed-panel ranks, additive fits and dependence corrections lack joint calibration.
Aggregate validation needs exact masses and independent pairs; trajectory bounds need
conditionally i.i.d.\ blocks and concern a fitted rank target.
Temporal results require their stated conditions; simulator calibration needs a null law.
Archives establish neither future-loss nor acquisition gains; chronological studies
do not establish a reproducible incremental gating benefit. Earlier-task comparisons are developmental.
Restricted training and historical availability remain qualified.
Non-retention does not prove no predictive value.

\paragraph{Code and data availability}
Code, licensed public data, results and derivations are available.\footnote{\url{https://github.com/MaxwellNi/certifying-forecasting-skill}.}
The restricted application releases only model-level and simulation aggregates;
observations, identifiers, predictions, residuals and loaders remain private.

%% file: paper/appendix.tex
\input{paper/trajectory_design_proof}

Proposition~\ref{prop:fdr} applies the BY theorem~\cite{benjamini2001control},
conditionally on selection when required, then averages. Marginal validity
alone is insufficient: for uniform $p$ and $\alpha<t$,
$\Pr(p\le\alpha\mid p\le t)=\alpha/t$.

\input{paper/reference_rank_main_proof}

\input{paper/feedback_identity_proof}

%% file: paper/trajectory_design_proof.tex
\begin{proof}[Proof of Proposition~\ref{prop:trajectory}]
Conditional independence gives
\[
\mathbb E[p_{sj}q_{tk}\mid\mathsf W_0,\mathcal H]
=m_sn_t+\mathbf1\{j=k\}\operatorname{Cov}(p_{s1},q_{t1}\mid\mathsf W_0,\mathcal H).
\]
Summing coefficients and averaging proves~\eqref{eq:trajectory-identity}.
One Bernoulli$(1/2)$ map pair, with all others constant, isolates
$\Lambda_{st}=1/16$ and proves necessity; sufficiency is immediate.
The independent-copy covariance identity gives $\mathbb E[Q\mid\mathcal H]
=\langle\boldsymbol\Omega,\Lambda\rangle$.
Centered comparisons lie in $[-1/2,1/2]$, hence
$|AB|\le\|\mathbf C\|_1\|\mathbf D\|_1/4$ and
$|Q|\le\|\boldsymbol\Omega\|_1/2$.
Hoeffding and empirical Bernstein each bound the same conditional mean;
splitting their failure budget yields~\eqref{eq:trajectory-hybrid}.
The independent evaluation/validation allocation proves~\eqref{eq:trajectory-bound}.
Monotonicity in $\alpha$ proves the fixed-$\delta$ inversion.

For the comparator, average
$(\sum_s c_s p_{s,ij})(\sum_t d_t q_{t,ik})$ over ordered distinct $i,j,k$.
A scalar pair's six-role average lies in $[-1/12,1/12]$:
ordering the first map's values $x_1<x_2<x_3$ reduces $24$ times this average to
$-2\operatorname{sgn}(y_1-y_3)$ for the second map's values; one tie gives a sum of two signs,
and complete ties give zero.
Triangle inequality gives $W_U$. Classical U-statistic concentration
uses $q$ disjoint triples. Since
$q\ge\lfloor2M/3\rfloor+1>2M/3$ and
$\|c\|_1\|d\|_1\le\|\mathbf C\|_1\|\mathbf D\|_1$,
\[
\frac{h(W_U,q,\alpha)}{h(W_S,M,\alpha-\delta)}
\le\frac13\sqrt{\frac{M\log(1/\alpha)}{q\log\{1/(\alpha-\delta)\}}}
<\frac1{\sqrt6}.
\]
If $W_S=0$, both widths vanish and no radius ratio is asserted.

For the shared baseline, write $S=(x-z)(y-z)$ with uncentered comparisons
$x,y,z\in\{0,1/2,1\}$. Conditioning on $z$ gives $S\in[-1/4,1]$.
For $Q=(u-w)(v-w)/2$, all three differences lie in $[-1,1]$;
the product lies between $w^2-1$ and $(1+|w|)^2$, giving $Q\in[-1/2,2]$.
Let $k(x,y)$ be the scalar six-role kernel above. Then
$U=k(h,Y)-k(h,b)-k(b,Y)+k(b,b)$.
A nonconstant triple of baseline values has \mbox{$k(b,b)=1/12$};
using $|k|\le1/12$ gives $U\in[-1/6,1/3]$.
For a constant baseline, its three terms vanish, leaving $|U|\le1/12$.
All three endpoint pairs are attained by three-point maps, including ties.
\end{proof}

%% file: paper/reference_rank_main_proof.tex
\paragraph{Finite rank references}
\begin{proof}[Proof of Proposition~\ref{prop:reference-main}]
For distinct $i,j,l$, conditional independence gives
$\mathbb E[a_{ij}b_{il}\mid V_i,W_i,Z_i,H]=F_V(V_i)F_W(W_i)$.
Averaging $(a_{ij}-f(Z_i))(b_{il}-g(Z_i))$ gives
$\vartheta+b_H$; the $k$ shared pairs in $C_i$ add $\Gamma/k$.
\end{proof}
\begin{proof}[Proof of Theorem~\ref{thm:reference-main}]
Symmetrize $(a_{12}-m_V(Z_1))(b_{13}-m_W(Z_1))$.
The focal and reference role expectations are $r_Vr_W,h_V(V),h_W(W)$,
each integrating to $\vartheta$, so the first projection is $\psi/3$.
Bounded U-statistic decomposition~\cite{hoeffding1948ustat} gives
$D_j-\vartheta=N^{-1}\sum_i\psi(O_{ji})+R_j$,
$\mathbb ER_j=0$, $\mathbb ER_j^2=O(N^{-2})$.
The scaled mean remainder has variance $O(N^{-1})$; the bounded-projection
CLT applies for the fixed common law and positive limiting variance.
Algebra gives $C_j-D_j=(J_j-Q_j)/(N-1)$, where $J_j,Q_j$ average
shared pairs and distinct triples. Their difference has mean $\Gamma$ and
variance $O(N^{-1})$, so its centered scaled mean has variance $O(N^{-2})$.
This proves both centered limits and $N\operatorname{Var}(A_j)\to\sigma_\psi^2$.
Disjoint-triple symmetrization gives $\mathbb EW_j^4=O(N^{-2})$ for
$W_j=A_j-\mathbb EA_j$. Thus
$\operatorname{Var}[(N/M)\sum_jW_j^2]=O(M^{-1})$ and
$N\bar W^2=O_p(M^{-1})$ imply $N\widehat v_A\to_p\sigma_\psi^2$.
Slutsky gives the boundary limits and
$T_C/[\Gamma\sqrt{MN}/\{(N-1)\sigma_\psi\}]\to_p1$ in the divergent case.
Conditional independence centers the reference-role residuals.
Boundedness makes the remainder and moment bounds uniform.
\end{proof}

\begin{proof}[Proof of Corollary~\ref{cor:estimated-reference}]
Let $\delta_f=f-m_V$, $\delta_g=g-m_W$. The fitted-minus-oracle
group difference, identical for $C,D$, is
\[
\begin{split}
\Delta_j=\frac1N\sum_i\{&-\delta_f(Z_i)(U_{W,i}-m_W(Z_i))\\
&-\delta_g(Z_i)(U_{V,i}-m_V(Z_i))+\delta_f(Z_i)\delta_g(Z_i)\}.
\end{split}
\]
It is an order-two U-statistic with conditional mean $b_H$ and, by bounded
comparisons and clipped fits,
$\operatorname{Var}(\Delta_j\mid H)\le
c_0(\varepsilon_f^2+\varepsilon_g^2)/N$ for an absolute constant $c_0$.
Conditional group independence gives
$\sqrt{MN}(\bar\Delta-b_H)=o_p(1)$; the product-rate condition gives
$\sqrt{MN}b_H=o_p(1)$.
Conditional unbiasedness yields $N\widehat v_\Delta=o_p(1)$, and sample
covariance Cauchy--Schwarz gives
$N|\widehat v_A-\widehat v_A^o|
\le N\widehat v_\Delta+2\sqrt{N\widehat v_\Delta\,N\widehat v_A^o}=o_p(1)$.
The oracle limits and ratio argument transfer.
\end{proof}
\input{paper/reference_certificate_proof}

%% file: paper/reference_certificate_proof.tex
\begin{proof}[Proof of Proposition~\ref{prop:category-certificate}]
Conditional on focal validation categories, rectangular pair comparisons
have means $m_V(c),m_W(c)$; their intervals fail with probability at most
$\delta/2$. The binomial limits, with total failure probability $\delta/2$,
are $u_c=\operatorname{Beta}^{-1}(1-\delta/(2C_Z);
n_c+1,m_{\rm val}-n_c)$ for $n_c<m_{\rm val}$, and one otherwise.
Since $u_c\ge n_c/m_{\rm val}$, $\mathcal P$ is feasible.
On the joint event it contains the true probability vector; bilinearity
gives endpoint extrema and
$\widehat B_-\le b_H\le\widehat B_+\le\widehat B_{\rm abs}$.

Here $\mathcal H_{\rm val}$ denotes independent training and validation.
Let $\sigma^2,R$ be the disjoint-triple kernel variance and range.
Permutation averaging and exponential convexity transfer the independent
$J$-kernel Hoeffding and Bernstein moment bounds to
$\bar D$~\cite{hoeffding1963}. The upper-deviation radii
$R\sqrt{x/(2J)}$ and $\sigma\sqrt{2x/J}+Rx/(3J)$ each fail with
probability at most $e^{-x}$.
Theorem~10 of~\cite{maurer2009empirical} gives
$\sigma\le s+R\sqrt{2x/(J-1)}$ except with probability $e^{-x}$.
Conditional on $\mathcal H_{\rm val}$, the smaller true-variance radius
is fixed. Union with the variance event proves
Eq.~\eqref{eq:variance-radius} at failure $2e^{-x}$, without independence
of $s,\bar D$. Since
$\mathbb E[\bar D\mid\mathcal H_{\rm val}]=\vartheta+b_H$,
adding validation failure $\delta$ proves the bound.

For fixed $\delta$, the lower bound increases with $\alpha$, so monotone
inversion gives super-uniform $p$-values, including $p=1$ when the
corrected score does not exceed the learning allowance.
Permutation averaging also transfers the negative-exponent Hoeffding bound.
On the joint validation event, with evaluation probability at least $1-\eta$,
the lower certificate is at least
$\vartheta-W-R[\sqrt{\log(1/(\alpha-\delta))}+
\sqrt{\log(1/\eta)}]/\sqrt{2J}$,
proving Eq.~\eqref{eq:aggregate-power-cost}.
\end{proof}

\input{paper/aggregate_validation_proof}

%% file: paper/aggregate_validation_proof.tex
\begin{proof}[Proof of Lemma~\ref{lem:aggregate-allowance}]
If $\mathcal S$ is empty, the deterministic corner bound suffices.
Otherwise, condition only on the focal validation categories, and write $a_c=m_V(c)-f(c)$,
$b_c=m_W(c)-g(c)$ and $e_{Ac}=\widehat a_c-a_c$,
$e_{Bc}=\widehat b_c-b_c$. The errors are independent across streams
and categories, with sub-Gaussian proxies $1/(4n_{Ac})$ and
$1/(4n_{Bc})$ by Hoeffding's lemma. Over $\mathcal S$,
\[
\begin{split}
\sum_cp_c(a_cb_c-\widehat a_c\widehat b_c)
={}&-\sum_cp_c\widehat b_ce_{Ac}-\sum_cp_c\widehat a_ce_{Bc}\\
 &+\sum_cp_ce_{Ac}e_{Bc}.
\end{split}
\]
Conditioning on the opposite stream gives upper-tail bounds $r_A,r_B$,
each failing with probability at most $e^{-x}$. For independent centered
sub-Gaussian $X,Y$ with proxies $s_X^2,s_Y^2$, condition on $X$ and
integrate a standard normal variable to obtain
$\mathbb E e^{tXY}\le(1-t^2s_X^2s_Y^2)^{-1/2}$ for
$|t|s_Xs_Y<1$. Thus $R_2=\sum_cp_ce_{Ac}e_{Bc}$ obeys
\[
 \log\mathbb E e^{tR_2}\le
 \frac{t^2\sum_cd_c^2}{2(1-t\max_cd_c)},\qquad
 0<t<1/\max_cd_c.
\]
Here $-\log(1-u^2)\le u^2/(1-|u|)$. Chernoff optimization gives
$\Pr(R_2>r_{AB})\le e^{-x}$. A union bound over the three tails gives
$3e^{-x}=\delta$ without requiring their events to be independent.
Missing categories contribute at most their deterministic corner bound;
conditioning fixes the missing set. Remove conditioning to finish.
\end{proof}

%% file: paper/feedback_identity_proof.tex
\begin{proof}[Proof of Proposition~\ref{prop:feedback-main}]
The entity effect cancels: $y_r-\bar y_O=\varepsilon_r-\bar\varepsilon_O$.
Shock orthogonality removes the other terms. \mbox{For $r\in I$,}
\[
\begin{aligned}
\frac{\mathbb E[(x_r-\bar x_O)(y_r-\bar y_O)]}{\sigma^2}
={}&-\frac{\sum_{j\in O}w_{rj}}{m}\\
&-\frac{\sum_{s\in O}w_{sr}}{m}
 +\frac{\sum_{s,j\in O}w_{sj}}{m^2}.
\end{aligned}
\]
The first two sums reduce to $E$ and $L$, giving Eq.~\eqref{eq:feedback-main}.
In the split product, forecast shocks precede $r$ and outcome shocks
occur at or after $r$, so its expectation is zero.
For nonlinear forecasts, set
$\mathcal G_s=\sigma(\alpha,\eta_1,\ldots,\eta_T,\varepsilon_1,\ldots,\varepsilon_{s-1})$.
For $s\ge r$, the forecast residual is $\mathcal G_s$-measurable, giving
$\mathbb E[(x_r-\bar x_E)\varepsilon_s]=0$.
Apply this at $r$ and average over $L$.
\end{proof}

\begin{proof}[Proof of Corollary~\ref{cor:directional}]
Each $S_g$ has mean zero. Bounded fourth moments give Lyapunov
ratio $O(G^{-1})$; independence gives
$G^{-1}\sum_g\{S_g^2-\mathbb ES_g^2\}\to_p0$ and $\bar S\to_p0$.
The sample variance converges to $v$; studentization completes the proof.
\end{proof}

\paragraph{Fixed peers}
Conditional on entity attributes, channel independence gives
$\mathbb E[U_{V,i}U_{W,i}]=\bar p_i\bar q_i$ and
$\mathbb E[a_{ij}b_{ij}]=p_{ij}q_{ij}$, where
$\bar p_i=k^{-1}\sum_{j\ne i}p_{ij}$ and similarly for $\bar q_i$.
Exact centering gives $\mathbb EC_i=0$; Eq.~\eqref{eq:reference-correction-main}
then gives Eq.~\eqref{eq:fixed-peers-main}.

%% file: ref.bib
@article{chernozhukov2018dml,
  author  = {Chernozhukov, Victor and Chetverikov, Denis and Demirer, Mert and Duflo, Esther and Hansen, Christian and Newey, Whitney and Robins, James},
  title   = {Double/debiased machine learning for treatment and structural parameters},
  journal = {The Econometrics Journal},
  year    = {2018},
  volume  = {21},
  number  = {1},
  pages   = {C1--C68}
}

@article{shah2020hardness,
  author  = {Shah, Rajen D. and Peters, Jonas},
  title   = {The hardness of conditional independence testing and the generalised covariance measure},
  journal = {The Annals of Statistics},
  year    = {2020},
  volume  = {48},
  number  = {3},
  pages   = {1514--1538}
}

@inproceedings{zhang2011kernel,
  title={Kernel-based conditional independence test and application in causal discovery},
  author={Zhang, Kun and Peters, Jonas and Janzing, Dominik and Sch{\"o}lkopf, Bernhard},
  booktitle={Uncertainty in Artificial Intelligence}, pages={804--813}, year={2011}
}

@article{benjamini2001control,
  title={The control of the false discovery rate in multiple testing under dependency},
  author={Benjamini, Yoav and Yekutieli, Daniel},
  journal={The Annals of Statistics}, volume={29}, number={4}, pages={1165--1188}, year={2001}
}

@inproceedings{pan2026eccit,
  title={Empirically Calibrated Conditional Independence Tests},
  author={Pan, Milleno and de Mathelin, Antoine and Tansey, Wesley},
  booktitle={International Conference on Artificial Intelligence and Statistics},
  series={Proceedings of Machine Learning Research},
  volume={300},
  pages={5194--5202},
  year={2026}
}

@article{chiang2022multiway,
  title={Multiway cluster robust double/debiased machine learning},
  author={Chiang, Harold D. and Kato, Kengo and Ma, Yukun and Sasaki, Yuya},
  journal={Journal of Business \& Economic Statistics},
  volume={40},
  number={3},
  pages={1046--1056},
  year={2022}
}

@article{harper2015movielens,
  author  = {Harper, F. Maxwell and Konstan, Joseph A.},
  title   = {The {MovieLens} datasets: History and context},
  journal = {ACM Transactions on Interactive Intelligent Systems},
  year    = {2015},
  volume  = {5},
  number  = {4},
  pages   = {19:1--19:19}
}

@article{makridakis2022m5,
  author  = {Makridakis, Spyros and Spiliotis, Evangelos and Assimakopoulos, Vassilios},
  title   = {{M5} accuracy competition: Results, findings, and conclusions},
  journal = {International Journal of Forecasting},
  year    = {2022},
  volume  = {38},
  number  = {4},
  pages   = {1346--1364}
}

@misc{trindade2015electricity,
  author       = {Trindade, Artur},
  title        = {{ElectricityLoadDiagrams20112014}},
  howpublished = {UCI Machine Learning Repository},
  year         = {2015},
  note         = {doi: 10.24432/C58C86}
}

@article{chen2022open,
  author  = {Chen, Andrew Y. and Zimmermann, Tom},
  title   = {Open source cross-sectional asset pricing},
  journal = {Critical Finance Review},
  year    = {2022},
  volume  = {11},
  number  = {2},
  pages   = {207--264}
}

@article{neweywest1987,
  title={A simple, positive semi-definite, heteroskedasticity and autocorrelation consistent covariance matrix},
  author={Newey, Whitney K. and West, Kenneth D.},
  journal={Econometrica}, volume={55}, number={3}, pages={703--708}, year={1987}
}

@article{hoeffding1963,
  author={Hoeffding, Wassily},
  title={Probability Inequalities for Sums of Bounded Random Variables},
  journal={Journal of the American Statistical Association},
  volume={58}, number={301}, pages={13--30}, year={1963},
  doi={10.1080/01621459.1963.10500830}
}

@article{clopper1934use,
  author={Clopper, C. J. and Pearson, E. S.},
  title={The Use of Confidence or Fiducial Limits Illustrated in the Case of the Binomial},
  journal={Biometrika}, volume={26}, number={4}, pages={404--413}, year={1934},
  doi={10.1093/biomet/26.4.404}
}

@article{hoeffding1948ustat,
 author={Hoeffding, Wassily},
 title={A Class of Statistics with Asymptotically Normal Distribution},
 journal={The Annals of Mathematical Statistics},
 volume={19}, number={3}, pages={293--325}, year={1948},
 doi={10.1214/aoms/1177730196}
}

@article{petersen2021partial,
  author = {Petersen, Lasse and Hansen, Niels Richard},
  title = {Testing Conditional Independence via Quantile Regression Based Partial Copulas},
  journal = {Journal of Machine Learning Research},
  year = {2021},
  volume = {22},
  number = {70},
  pages = {1--47}
}

@article{arellano1995another,
  title={Another look at the instrumental variable estimation of error-components models},
  author={Arellano, Manuel and Bover, Olympia},
  journal={Journal of Econometrics},
  volume={68},
  number={1},
  pages={29--51},
  year={1995},
  doi={10.1016/0304-4076(94)01642-D}
}

@article{so1999recursive,
  title={Recursive mean adjustment in time-series inferences},
  author={So, Beong Soo and Shin, Dong Wan},
  journal={Statistics \& Probability Letters},
  volume={43},
  number={1},
  pages={65--73},
  year={1999},
  doi={10.1016/S0167-7152(98)00247-8}
}

@article{leung2018testing,
  author  = {Leung, Dennis and Drton, Mathias},
  title   = {Testing independence in high dimensions with sums of rank correlations},
  journal = {The Annals of Statistics},
  year    = {2018},
  volume  = {46},
  number  = {1},
  pages   = {280--307},
  doi     = {10.1214/17-AOS1550}
}

@article{fair1990comparing,
 author={Fair, Ray C. and Shiller, Robert J.},
 title={Comparing Information in Forecasts from Econometric Models},
 journal={American Economic Review}, volume={80}, number={3}, pages={375--389}, year={1990}
}

@article{giacomini2006tests,
 author={Giacomini, Raffaella and White, Halbert},
 title={Tests of Conditional Predictive Ability},
 journal={Econometrica}, volume={74}, number={6}, pages={1545--1578}, year={2006},
 doi={10.1111/j.1468-0262.2006.00718.x}
}

@inproceedings{maurer2009empirical,
  author = {Andreas Maurer and Massimiliano Pontil},
  title = {Empirical {Bernstein} Bounds and Sample Variance Penalization},
  booktitle = {Proceedings of the 22nd Annual Conference on Learning Theory},
  year = {2009}
}

@inproceedings{peel2010empirical,
  author = {Thomas Peel and Sandrine Anthoine and Liva Ralaivola},
  title = {Empirical {Bernstein} Inequalities for {U}-Statistics},
  booktitle = {Advances in Neural Information Processing Systems},
  volume = {23},
  year = {2010}
}

@article{schechtman1987gini,
 author={Schechtman, Edna and Yitzhaki, Shlomo},
 title={A Measure of Association Based on {Gini}'s Mean Difference},
 journal={Communications in Statistics: Theory and Methods},
 volume={16}, number={1}, pages={207--231}, year={1987},
 doi={10.1080/03610928708829359}
}

@article{waudby2024betting,
 author={Waudby-Smith, Ian and Ramdas, Aaditya},
 title={Estimating means of bounded random variables by betting},
 journal={Journal of the Royal Statistical Society Series B},
 volume={86}, number={1}, pages={1--27}, year={2024},
 doi={10.1093/jrsssb/qkad009}
}

@article{kim2025semi,
 author={Kim, Ilmun and Wasserman, Larry and Balakrishnan, Sivaraman and Neykov, Matey},
 title={Semi-Supervised {U}-statistics},
 journal={The Annals of Statistics}, volume={53}, number={6},
 pages={2488--2515}, year={2025}, doi={10.1214/25-AOS2550}
}

@misc{candanedo2017appliances,
 author={Candanedo, Luis}, title={Appliances Energy Prediction},
 howpublished={UCI Machine Learning Repository}, year={2017},
 doi={10.24432/C5VC8G}
}

@misc{hogue2019metro,
 author={Hogue, John}, title={Metro Interstate Traffic Volume},
 howpublished={UCI Machine Learning Repository}, year={2019},
 doi={10.24432/C5X60B}
}

@article{liu2024falsification,
 author={Liu, Lin and Mukherjee, Rajarshi and Robins, James M.},
 title={Assumption-lean falsification tests of rate double-robustness of double-machine-learning estimators},
 journal={Journal of Econometrics}, volume={240}, number={2}, pages={105500},
 year={2024}, doi={10.1016/j.jeconom.2023.105500}
}

@article{robins2017higher,
 author={Robins, James M. and Li, Lingling and Mukherjee, Rajarshi and Tchetgen Tchetgen, Eric and van der Vaart, Aad},
 title={Minimax estimation of a functional on a structured high-dimensional model},
 journal={The Annals of Statistics}, volume={45}, number={5}, pages={1951--1987}, year={2017},
 doi={10.1214/16-AOS1515}
}

@misc{mcgrath2026tuning,
  author = {Sean McGrath and Rajarshi Mukherjee},
  title = {Nuisance Function Tuning and Sample Splitting for Optimally Estimating a Doubly Robust Functional},
  year = {2026},
  note = {arXiv:2212.14857v5, to appear in The Annals of Statistics}
}

@misc{news2018data,
 author={Lu{\'i}s Torgo and Nuno Moniz},
 title={News Popularity in Multiple Social Media Platforms},
 year={2018},
 howpublished={UCI Machine Learning Repository},
 doi={10.24432/C5H029}
}

@IEEEtranBSTCTL{fullauthornames,
 CTLdash_repeated_names = {no}
}

@article{moscovich2022preprocessing,
 author={Moscovich, Amit and Rosset, Saharon},
 title={On the Cross-Validation Bias due to Unsupervised Preprocessing},
 journal={Journal of the Royal Statistical Society Series B: Statistical Methodology},
 volume={84}, number={4}, pages={1474--1502}, year={2022},
 doi={10.1111/rssb.12537}
}

@misc{chen2015beijing,
  author={Song Chen},
  title={{Beijing PM2.5}},
  year={2015},
  howpublished={UCI Machine Learning Repository},
  doi={10.24432/C5JS49}
}

@article{hewamalage2023pitfalls,
  author = {Hansika Hewamalage and Klaus Ackermann and Christoph Bergmeir},
  title = {Forecast evaluation for data scientists: common pitfalls and best practices},
  journal = {Data Mining and Knowledge Discovery},
  volume = {37},
  number = {2},
  pages = {788--832},
  year = {2023},
  doi = {10.1007/s10618-022-00894-5}
}

@article{clark2007approximately,
  author = {Todd E. Clark and Kenneth D. West},
  title = {Approximately normal tests for equal predictive accuracy in nested models},
  journal = {Journal of Econometrics},
  volume = {138},
  number = {1},
  pages = {291--311},
  year = {2007},
  doi = {10.1016/j.jeconom.2006.05.023}
}

@article{choe2024comparing,
  author = {Yo Joong Choe and Aaditya Ramdas},
  title = {Comparing Sequential Forecasters},
  journal = {Operations Research},
  volume = {72},
  number = {4},
  pages = {1368--1387},
  year = {2024},
  doi = {10.1287/opre.2021.0792}
}
